\definecolor{darkred}{rgb}{0.8,0.1,0.1}
\definecolor{lightblue}{rgb}{0.1,0.1,0.8}
\newtheorem{definition}{Definition}
\newtheorem{proposition}[definition]{Proposition}
\newtheorem{lemma}[definition]{Lemma}
\newtheorem{theorem}[definition]{Theorem}
\newtheorem{corollary}[definition]{Corollary}
\newmdenv[skipabove=7pt,
skipbelow=7pt,
backgroundcolor=darkblue!15,
innerleftmargin=5pt,
innerrightmargin=5pt,
innertopmargin=5pt,
leftmargin=0cm,
rightmargin=0cm,
innerbottommargin=5pt,
linewidth=1pt]{tBox}
\newmdenv[skipabove=7pt,
skipbelow=7pt,
backgroundcolor=blue2!25,
innerleftmargin=5pt,
innerrightmargin=5pt,
innertopmargin=5pt,
leftmargin=0cm,
rightmargin=0cm,
innerbottommargin=5pt,
linewidth=1pt]{dBox}
\newmdenv[skipabove=7pt,
skipbelow=7pt,
backgroundcolor=darkkblue!15,
innerleftmargin=5pt,
innerrightmargin=5pt,
innertopmargin=5pt,
leftmargin=0cm,
rightmargin=0cm,
innerbottommargin=5pt,
linewidth=1pt]{sBox}
\definecolor{darkblue}{RGB}{0,76,156}
\definecolor{darkkblue}{RGB}{0,0,153}
\definecolor{blue2}{RGB}{102,178,255}
\def\squareforqed{\hbox{\rlap{$\sqcap$}$\sqcup$}}
\def\qed{\ifmmode\squareforqed\else{\unskip\nobreak\hfil
\penalty50\hskip1em\null\nobreak\hfil\squareforqed
\parfillskip=0pt\finalhyphendemerits=0\endgraf}\fi}
\def\endenv{\ifmmode\;\else{\unskip\nobreak\hfil
\penalty50\hskip1em\null\nobreak\hfil\;
\parfillskip=0pt\finalhyphendemerits=0\endgraf}\fi}
\newenvironment{proof}{\noindent \textbf{{Proof~} }}{\hfill $\blacksquare$}
\newenvironment{remark}{\noindent \textbf{{Remark~}}}{}
\mathchardef\ordinarycolon\mathcode`\:
\def\vcentcolon{\mathrel{\mathop\ordinarycolon}}
\newcommand{\nc}{\newcommand}
\nc{\rnc}{\renewcommand}
\nc{\beg}{\begin{equation}}
\nc{\eeq}{{\end{equation}}}
\nc{\beqa}{\begin{eqnarray}}
\nc{\eeqa}{\end{eqnarray}}
\nc{\lbar}[1]{\overline{#1}}
\nc{\bra}[1]{\langle#1|}
\nc{\ket}[1]{|#1\rangle}
\nc{\ketbra}[2]{|#1\rangle\!\langle#2|}
\nc{\braket}[2]{\langle#1|#2\rangle}
\nc{\proj}[1]{| #1\rangle\!\langle #1 |}
\nc{\avg}[1]{\langle#1\rangle}
\nc{\Rank}{\operatorname{Rank}}
\nc{\smfrac}[2]{\mbox{$\frac{#1}{#2}$}}
\nc{\tr}{\operatorname{Tr}}
\nc{\ox}{\otimes}
\nc{\dg}{\dagger}
\nc{\dn}{\downarrow}
\nc{\cA}{{\cal A}}
\nc{\cB}{{\cal B}}
\nc{\cC}{{\cal C}}
\nc{\cD}{{\cal D}}
\nc{\cE}{{\cal E}}
\nc{\cF}{{\cal F}}
\nc{\cG}{{\cal G}}
\nc{\cH}{{\cal H}}
\nc{\cI}{{\cal I}}
\nc{\cJ}{{\cal J}}
\nc{\cK}{{\cal K}}
\nc{\cL}{{\cal L}}
\nc{\cM}{{\cal M}}
\nc{\cN}{{\cal N}}
\nc{\cO}{{\cal O}}
\nc{\cP}{{\cal P}}
\nc{\cQ}{{\cal Q}}
\nc{\cR}{{\cal R}}
\nc{\cS}{{\cal S}}
\nc{\cT}{{\cal T}}
\nc{\cV}{{\cal V}}
\nc{\cU}{{\cal U}}
\nc{\cX}{{\cal X}}
\nc{\cY}{{\cal Y}}
\nc{\cZ}{{\cal Z}}
\nc{\cW}{{\cal W}}
\nc{\csupp}{{\operatorname{csupp}}}
\nc{\qsupp}{{\operatorname{qsupp}}}
\nc{\var}{{\operatorname{var}}}
\nc{\rar}{\rightarrow}
\nc{\lrar}{\longrightarrow}
\nc{\polylog}{{\operatorname{polylog}}}
\nc{\wt}{{\operatorname{wt}}}
\nc{\av}[1]{{\left\langle {#1} \right\rangle}}
\nc{\supp}{{\operatorname{supp}}}
\def\a{\alpha}
\def\b{\beta}
\def\e{\varepsilon}
\def\ve{\varepsilon}
\def\x{\xi}
\def\U{\Upsilon}
\def\O{\Omega}
\nc{\RR}{{{\mathbb R}}}
\nc{\CC}{{{\mathbb C}}}
\nc{\FF}{{{\mathbb F}}}
\nc{\NN}{{{\mathbb N}}}
\nc{\ZZ}{{{\mathbb Z}}}
\nc{\PP}{{{\mathbb P}}}
\nc{\QQ}{{{\mathbb Q}}}
\nc{\UU}{{{\mathbb U}}}
\nc{\EE}{{{\mathbb E}}}
\nc{\id}{{\operatorname{id}}}
\nc{\CHSH}{{\operatorname{CHSH}}}
\nc{\Hom}[2]{\mbox{Hom}(\CC^{#1},\CC^{#2})}
\nc{\rU}{\mbox{U}}
\nc{\ob}[1]{#1}
\nc{\SEP}{{\text{SEP}}}
\nc{\NS}{{\text{NS}}}
\nc{\LOCC}{{\text{LOCC}}}
\nc{\PPT}{{\text{PPT}}}
\nc{\EXT}{{\text{EXT}}}
\nc{\Sym}{{\operatorname{Sym}}}
\nc{\ERLO}{{E_{\text{r,LO}}}}
\nc{\ERLOCC}{{E_{\text{r,LOCC}}}}
\nc{\ERPPT}{{E_{\text{r,PPT}}}}
\nc{\ERLOCCinfty}{{E^{\infty}_{\text{r,LOCC}}}}
\nc{\Aram}{{\operatorname{\sf A}}}
\nc{\minimize}{{\text{minimize}}}
\nc{\maximize}{{\text{maximize}}}
\def\grd@save@target#1{%
  \def\grd@target{#1}}
\def\grd@save@start#1{%
  \def\grd@start{#1}}
\tikzset{
  grid with coordinates/.style={
    to path={%
      \pgfextra{%
        \edef\grd@@target{(\tikztotarget)}%
        \tikz@scan@one@point\grd@save@target\grd@@target\relax
        \edef\grd@@start{(\tikztostart)}%
        \tikz@scan@one@point\grd@save@start\grd@@start\relax
        \draw[minor help lines,magenta] (\tikztostart) grid (\tikztotarget);
        \draw[major help lines] (\tikztostart) grid (\tikztotarget);
        \grd@start
        \pgfmathsetmacro{\grd@xa}{\the\pgf@x/1cm}
        \pgfmathsetmacro{\grd@ya}{\the\pgf@y/1cm}
        \grd@target
        \pgfmathsetmacro{\grd@xb}{\the\pgf@x/1cm}
        \pgfmathsetmacro{\grd@yb}{\the\pgf@y/1cm}
        \pgfmathsetmacro{\grd@xc}{\grd@xa + \pgfkeysvalueof{/tikz/grid with coordinates/major step}}
        \pgfmathsetmacro{\grd@yc}{\grd@ya + \pgfkeysvalueof{/tikz/grid with coordinates/major step}}
        \foreach \x in {\grd@xa,\grd@xc,...,\grd@xb}
        \node[anchor=north] at (\x,\grd@ya) {\pgfmathprintnumber{\x}};
        \foreach \y in {\grd@ya,\grd@yc,...,\grd@yb}
        \node[anchor=east] at (\grd@xa,\y) {\pgfmathprintnumber{\y}};
      }
    }
  },
  minor help lines/.style={
    help lines,
    step=\pgfkeysvalueof{/tikz/grid with coordinates/minor step}
  },
  major help lines/.style={
    help lines,
    line width=\pgfkeysvalueof{/tikz/grid with coordinates/major line width},
    step=\pgfkeysvalueof{/tikz/grid with coordinates/major step}
  },
  grid with coordinates/.cd,
  minor step/.initial=.2,
  major step/.initial=1,
  major line width/.initial=2pt,
}
\tikzset{
  treenode/.style = {align=center, inner sep=0pt, text centered,
    font=\sffamily},
  arn_n/.style = {treenode, circle, white, font=\sffamily\bfseries, draw=black,
    fill=black, text width=1.5em},
  arn_r/.style = {treenode, circle, red, draw=red, 
    text width=1.5em, very thick},
  arn_x/.style = {treenode, rectangle, draw=black,
    minimum width=0.5em, minimum height=0.5em}
}
\newcommand{\foo}[1]{%
\begin{tikzpicture}[#1]%
\draw[<->] (-1,0) -- (1,0);%
\draw (-0.2,0.2) -- (0.2,-0.2);%
\end{tikzpicture}%
}
\begin{document}
\title{On converse bounds for classical communication over quantum channels}
\author{Xin Wang$^{1,2}$}
\email{wangxinfelix@gmail.com}

\author{Kun Fang$^{1,3}$}
\email{kf383@cam.ac.uk}

\author{Marco Tomamichel$^{1}$}
\email{marco.tomamichel@uts.edu.au}

\affiliation{$^1$Centre for Quantum Software and Information, School of Software, Faculty of Engineering and Information Technology, University of Technology Sydney, NSW 2007, Australia}

\affiliation{$^2$Joint Center for Quantum Information and Computer Science, University of Maryland, College Park, Maryland 20742, USA}

\affiliation{$^3$ Department of Applied Mathematics and Theoretical Physics,\\ University of Cambridge, Cambridge, CB3 0WA, UK}

\begin{abstract}
We explore several new converse bounds for classical communication over quantum channels in both the one-shot and asymptotic regimes.
First, we show that the Matthews-Wehner meta-converse bound for entanglement-assisted classical communication can be achieved by activated, no-signalling assisted codes, suitably generalizing a result for classical channels.
Second, we derive a new efficiently computable meta-converse on the amount of classical information unassisted codes can transmit over a single use of a quantum channel. As applications, we provide a finite resource analysis of classical communication over quantum erasure channels, including the second-order and moderate deviation asymptotics.
Third, we explore the asymptotic analogue of our new meta-converse, the $\Upsilon$-information of the channel. We show that its regularization is an upper bound on the classical capacity, which is generally tighter than the entanglement-assisted capacity and other known efficiently computable strong converse bounds. For covariant channels we show that the $\Upsilon$-information is a strong converse bound.
\end{abstract}

 \maketitle

\section{Introduction}
One of the central problems in quantum information theory is to determine the capability of a noisy quantum channel to transmit classical messages faithfully. The classical capacity of a quantum channel is the highest rate (in bits per channel use) at which it can convey classical information such that the error probability vanishes asymptotically as the code length increases. 
The Holevo-Schumacher-Westmoreland (HSW) theorem \cite{Holevo1973,Holevo1998,Schumacher1997} establishes that the classical capacity of a noisy quantum channel is given by its regularized Holevo information.

However, in realistic settings, there are natural restrictions imposed on the code length. One fundamental question thus asks how much classical information can be transmitted over a single use of a quantum channel when a finite decoding error is tolerated.  Of particular interest is the converse bound given by Polyanskiy, Poor and Verd\'u (PPV) for classical channels~\cite{Polyanskiy2010}. Their bound, named as ``meta-converse'', was established based on hypothesis testing and it limits the performance of a coding scheme
given fixed resources. They showed by numerical examples that the bound is quite tight for several channels of interest, even at small blocklengths. Since then, converse bounds with a similar structure to the PPV bound are also called meta-converse. For quantum channels, Matthews and Wehner~\cite{Matthews2014} extended the hypothesis testing approach to the task of transmitting classical bits over quantum channels and formulated converse bounds for codes with or without entanglement assistance.
Several other upper and lower bounds on the one-shot classical capacity were explored, e.g.\ in \cite{Mosonyi2009a,Wang2012,Renes2011,Hayashi2006a}, but these in general do not match and are often hard to compute.

In Section~\ref{sec: matthews} we build on an exact expression, provided in~\cite{Wang2016g}, for the amount of classical information that can be transmitted over a single use of a quantum channel using codes that are assisted by no-signalling correlations. Using this result we show that the hypothesis testing relative entropy converse bound by Matthews and Wehner~\cite{Matthews2014} can be achieved and is optimal for activated, no-signalling assisted codes. This generalizes to the quantum setting a result by Matthews~\cite{Matthews2012} for no-signalling assisted classical codes, with the additional twist that in the quantum setting the codes require a classical noiseless channel as a catalyst. 

In Section~\ref{sec: new meta converse} we provide a new efficiently computable (as a semi-definite program) meta-converse that upper bounds the amount of information that can be transmitted with a single use of the channel by unassisted codes. This meta-converse, in the spirit of the classical meta-converse by Polyanskiy, Poor and Verd\'u~\cite{Polyanskiy2010}, relates the channel coding problem to a binary composite hypothesis test between the actual channel and a class of subchannels that are generalizations of the useless channels for classical communication. As a simple application, in Section~\ref{sec: erasure channel}, 
we apply our meta-converse to establish second-order asymptotics~\cite{Tomamichel2015} and moderate deviation asymptotics~\cite{Cheng2017b,Chubb2017}  for the classical capacity of the quantum erasure channel.

In Section~\ref{sec: upsilon information} we give a new upper bound for the classical capacity of quantum channels inspired by our meta-converse, which we call $\U$-information of the channel. We again interpret this bound as a relative entropy distance between the quantum channel and a class of useless completely positive trace non-increasing maps. We show that the regularized $\U$-information is a weak converse bound that is always smaller than the entanglement-assisted classical capacity and the semi-definite program strong converse bound in \cite{Wang2016g}.
Furthermore, for covariant channels, we show that the $\U$-information is in fact a strong converse bound. 
 
\section{Unassisted, entanglement-assisted and no-signalling assisted codes}

For our purposes, a quantum channel $\cN_{A'\to B}$ is a completely positive (CP) and trace-preserving (TP) linear map from operators on a finite-dimensional Hilbert space $A'$ to operators on a finite-dimensional Hilbert space $B$. We are interested in sending classical messages from Alice to Bob via a given quantum channel $\cN$. The usual coding scheme is as follows. Alice encodes her message via an operation $\cE_{A\to A'}$ and sends the encoded message to Bob through the channel $\cN_{A'\to B}$. After receiving the message, Bob performs an operation $\cD_{B\to B'}$ to decode it. More generally, instead of considering the encoding and decoding operations separately, one could imagine the coding protocol as a single super-operator $\Pi_{AB\to A'B'}$. The authors of Ref.~\cite{Chiribella2008} showed that a two-input and two-output CPTP map $\Pi_{AB\to A'B'}$ sends any CPTP map $\cN_{A'\to B}$ to another CPTP map $\cM_{A \to B'}$ if and only if $\Pi_{AB\to A'B'}$ is B to A no-signalling (see also \cite{Duan2016}). We denote by $\cM_{A\to B'}=\Pi_{AB\to A'B'}\circ\cN_{A'\to B}$ the resulting composite channel of the super-operator $\Pi_{AB\to A'B'}$ and the channel $\cN_{A'\to B}$. Then the classical communication task is equivalent to Alice sending the classical messages to Bob using the effective channel $\cM_{A\to B'}$. We say $\Pi$ is an $\O$-assisted code if it can be implemented by local operations with $\O$-assistance.  In the following, we eliminate $\O$ for the case of unassisted codes and write $\O=\rm{E}$ and $\O=\rm{NS}$ for entanglement-assisted and no-signalling-assisted (NS-assisted) codes, respectively. 
In particular,
\begin{itemize}
\item
an unassisted code reduces to the product of encoder and decoder, i.e., $\Pi= \cD_{B\to B'}\cE_{A\to A'}$;
\item
an entanglement-assisted code corresponds to a superchannel of the form $\Pi=\cD_{B\widehat B\to B'}\cE_{A\widehat A\to A'}\Psi_{\widehat A\widehat B}$, where $\Psi_{\widehat A\widehat B}$ can be any entangled state shared between Alice and Bob;
\item
a NS-assisted code corresponds to a superchannel which is no-signalling from Alice to Bob and vice-versa. 
\end{itemize}

\begin{figure}[H]
\centering
\includegraphics{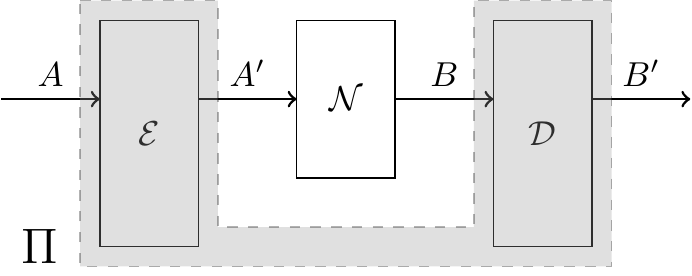}
\caption{General code scheme}
\end{figure}

Given a quantum channel $\cN_{A\to B}$ and any $\O$-assisted code $\Pi$ with size $m$,
the optimal average success probability of $\cN$ to transmit $m$ messages is given by
\begin{equation}\label{def: p succ}
\begin{split}
p_{\rm succ,\O}& (\cN,m):=\frac{1}{m}\sup \sum_{k=1}^{m} \tr\cM(\proj k)\proj k, \\
& \text{s.t.} \ \cM =\Pi\circ \cN\text{ is the effective channel.}
\end{split}
\end{equation}

With this in hand, we now say that a triplet $(r, n, \ve)$ is achievable on the channel $\cN$ with $\O$-assisted codes if
\begin{align}
\frac{1}{n} \log m \ge r, \text{ and }\ p_{\rm succ,\O}(\cN^{\ox n},m)\ge 1-\ve.
\end{align}
Throughout the paper we take the logarithm to be base two unless stated otherwise.
We are interested in the following boundary of the non-asymptotic achievable region:
\begin{align}
C_{\O}^{(1)}(\cN,\e) := \sup \big\{\log m \ \big|\  p_{\rm succ,\O}(\cN,m)\ge 1-\ve\big\}.
\end{align}
We also define $p_{\rm succ,\O}(\cN,\rho_A,m)$ and $C_\O^{(1)}(\cN,\rho_A,\e)$ as the same optimization but only using codes with a fixed average input $\rho_A$. 
The $\O$-assisted classical capacity of a quantum channel is 
\begin{align}
C_{\O}(\cN)=\lim_{\ve\to0}\lim_{n\to\infty}\frac{1}{n}C_{\O}^{(1)}(\cN^{\ox n},\e).
\end{align}

\section{Matthews-Wehner converse via activated, no-signalling assisted codes}
\label{sec: matthews}

For classical communication over quantum channels assisted by entanglement, Matthews and Wehner \cite{Matthews2014} proved a meta-converse bound $R(\cN,\ve)$ in terms of the hypothesis testing relative entropy which generalizes Polyanskiy, Poor and Verd\'u's approach \cite{Polyanskiy2010} to quantum channels assisted by entanglement. Given a quantum channel~$\cN$, they proved that~\cite{Matthews2014} $C_{E}^{(1)}(\cN,\ve) \le R(\cN,\e)$ where 
\begin{gather}
R(\cN,\e) := \max_{\rho_{A'}}\min_{\sigma_B}D_{H}^{\e}(\cN_{A\to B}(\phi_{A'A})\|\rho_{A'}\ox \sigma_B),\label{eq: DH NS 1}
\end{gather}
$\phi_{AA'}=\big(\1_{A}\ox\rho_{A'}^{{1}/{2}}\big)\widetilde \Phi_{AA'}\big(\1_{A}\ox\rho_{A'}^{{1}/{2}}\big)$ is a purification of $\rho_{A'}$ and $\widetilde \Phi_{AA'} = \sum_{ij} \ket{i_Ai_{A'}}\bra{j_Aj_{A'}}$ denotes the unnormalized maximally entangled state. In the above expression the quantum hypothesis testing relative entropy is defined as~\cite{Wang2012} $D_H^\e(\rho_0\|\rho_1) := -\log\b_\e(\rho_0\|\rho_1)$ with
$\b_\e(\rho_0\|\rho_1)  =  \min \left\{\tr Q\rho_1 \,\big|\, 1-\tr Q\rho_0\le\e, 0\le Q\le \1\right\}$, which is the minimum type-II error for the test while the type-I error
is no greater than~$\e$. Note that $\b_\e$ is a fundamental quantity in quantum theory \cite{Helstrom1976,Hiai1991,Ogawa2000} with many applications (e.g., \cite{Wang2012,Hayashi2017g,Fang2017,Qi2017,AJW17,Regula2017c,Tomamichel2016,Wang2018}) and can be solved by a semi-definite program (SDP). The Matthews-Wehner bound in Eq.~\eqref{eq: DH NS 1} thus constitutes an SDP itself, i.e.
\begin{equation}\label{R E SDP}
\begin{split}
R(\cN,\e)=-\log\quad \underset{F_{AB},\,\rho_A,\,\lambda}{\minimize} &\quad \ \lambda  \\
\text{subject to} &\quad \  0\le F_{AB}\le \rho_A\otimes \1_B,\\
& \quad \tr\rho_A=1,\\
 &\quad \tr_{A}F_{AB}\le \lambda \1_B\\
 &\quad \tr J_{\cN}F_{AB}\ge 1-\e.
\end{split}
\end{equation}
Here the Choi-Jamio\l{}kowski matrix \cite{Choi1975,Jamiokowski1972} of $\cN$ is given by $J_{\cN} =\sum_{ij} \ketbra{i_A}{j_{A}} \ox \cN(\ketbra{i_{A'}}{j_{A'}})$, where $\{\ket{i_A}\}$ and $\{\ket{i_{A'}}\}$ are orthonormal bases on isomorphic Hilbert spaces $\cH_A$ and $\cH_{A'}$, respectively.

For classical channels, the Matthews-Wehner bound is exactly equal to the one-shot classical capacity assisted by NS codes~\cite{Matthews2012}. 
For quantum channels the one-shot $\e$-error capacity assisted by NS codes is given by \cite{Wang2016g}
\begin{equation}\label{prime one-shot}
\begin{split}
C_{\text{NS}}^{(1)}(\cN,\e)=
-\log\ \ \underset{F_{AB},\, \rho_A,\, \eta}{\minimize} &\ \ \eta \\
 \text{subject to}  &\ \ 0\le F_{AB}\le \rho_A\otimes \1_B,\\
 &\ \tr\rho_A=1,\\
&\ \tr_{A}F_{AB}=\eta\1_B,\\
&\ \tr J_{\cN}F_{AB}\ge1-\e. 
\end{split}\end{equation}
Note that the only difference between the SDPs (\ref{R E SDP}) and (\ref{prime one-shot}) is the partial trace constraint of $F_{AB}$. However, unlike in the classical special case, the SDPs in~\eqref{R E SDP} and~\eqref{prime one-shot} are not equal in general~\cite{Wang2016g}.

In this section we show that this gap can be closed by considering activated, NS-assisted codes.
The concept of activated capacity follows the idea of potential capacities of quantum channels~\cite{Yang2015,Smith2008a,Hayden2012}. The model is described as follows. For a quantum channel $\cN$ assisted by NS codes, we can first borrow a noiseless classical channel $\cI_m$ whose capacity is $\log m$, then we can use $\cN\ox \cI_m$ to transmit classical messages. After the communication finishes, we just pay back the capacity of $\cI_m$. The code scheme in this scenario is what we call activated code. Note that this kind of communication method was also studied in zero-error information theory \cite{Acin2015a,Duan2015a}. 
\begin{definition}
 For any quantum channel $\cN$, we define  
 \begin{align}
C_{\rm{NS},a}^{(1)}(\cN,\e) := \sup_{m\ge 1}\left[C_{\rm{NS}}^{(1)}(\cN\ox \cI_m,\e)-\log m\right],
\end{align}
 where $\cI_m(\rho):=\sum_{i=1}^m\tr(\rho\proj{i})\proj{i}$ the classical noiseless channel with capacity $\log m$.
 \end{definition}
The following is the main result of this section.
\begin{theorem}\label{NS a theorem}
For any quantum channel $\cN_{A \to B}$ and error tolerance $\ve \in (0,1)$, we have
\begin{align}
C_{\rm{NS},a}^{(1)}(\cN,\e) & = R(\cN,\e).
\end{align}
\end{theorem}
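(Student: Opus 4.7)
The plan is to prove $C_{\rm{NS},a}^{(1)}(\cN,\e)=R(\cN,\e)$ by establishing both inequalities through explicit feasibility maps between the SDPs~\eqref{R E SDP} and~\eqref{prime one-shot}. The two SDPs differ only in the marginal constraint on $F_{AB}$ ($\le \lambda\1_B$ for $R$ versus $=\eta\1_B$ for $C_{\rm{NS}}^{(1)}$), so the heart of the argument is to show that an auxiliary noiseless channel $\cI_m$ supplies exactly the freedom required to promote the inequality constraint to an equality constraint.

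For the converse direction $C_{\rm{NS},a}^{(1)}(\cN,\e)\le R(\cN,\e)$, I would fix $m$, start from an optimal triple $(F',\rho',\eta)$ for $C_{\rm{NS}}^{(1)}(\cN\otimes\cI_m,\e)$, and ``project out'' the noiseless channel by defining
\begin{equation*}
F_{A_1B_1}:=\tr_{A_2B_2}\bigl(J_{\cI_m,A_2B_2}\,F'_{A_1A_2B_1B_2}\bigr),\qquad \rho_{A_1}:=\tr_{A_2}\rho'_{A_1A_2}.
\end{equation*}
The factorization $J_{\cN\otimes \cI_m}=J_\cN\otimes J_{\cI_m}$ makes the success constraint transfer automatically: $\tr J_\cN F=\tr J_{\cN\otimes\cI_m}F'\ge 1-\e$. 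Monotonicity of the CP map $X\mapsto \tr_{A_2B_2}(J_{\cI_m}X)$, combined with the identity $\tr_{A_2B_2}(J_{\cI_m}(\rho'\otimes \1_{B_1B_2}))=\rho_{A_1}\otimes \1_{B_1}$, yields $F_{A_1B_1}\le \rho_{A_1}\otimes \1_{B_1}$. Since $J_{\cI_m}\le \1_{A_2B_2}$ and $\tr_{A_1A_2}F'=\eta\1_{B_1B_2}$, one then gets $\tr_{A_1}F_{A_1B_1}\le \eta m\1_{B_1}$. Hence $(F,\rho,\lambda=\eta m)$ is $R$-feasible, which rearranges to $C_{\rm{NS}}^{(1)}(\cN\otimes\cI_m,\e)-\log m\le R(\cN,\e)$ for every $m$.

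For the achievability direction $C_{\rm{NS},a}^{(1)}(\cN,\e)\ge R(\cN,\e)$, I would take any $R$-feasible triple $(F_{A_1B_1},\rho_{A_1},\lambda)$, set the slack $\Delta_{B_1}:=\lambda\1_{B_1}-\tr_{A_1}F_{A_1B_1}\ge 0$, and for $m\ge 2$ propose
\begin{equation*}
F'_{A_1A_2B_1B_2}:=\tfrac{1}{m}F_{A_1B_1}\otimes J_{\cI_m,A_2B_2}+\tfrac{1}{m(m-1)}\rho_{A_1}\otimes \Delta_{B_1}\otimes \bigl(\1_{A_2B_2}-J_{\cI_m,A_2B_2}\bigr),
\end{equation*}
together with $\rho'_{A_1A_2}:=\rho_{A_1}\otimes \1_{A_2}/m$ and $\eta:=\lambda/m$. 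The two summands are supported on orthogonal $A_2B_2$ subspaces (the computational-basis diagonal blocks $\{\ket{ii}\bra{ii}\}$ and their off-diagonal complement), so each SDP constraint can be checked blockwise. The success constraint reduces to $\tr J_{\cN\otimes\cI_m}F'=\tr J_\cN F\ge 1-\e$ because $J_{\cI_m}(\1-J_{\cI_m})=0$; the marginals of the two summands combine to give $\tr_{A_1A_2}F'=(\lambda/m)\1_{B_1B_2}$ using $\tr_{A_1}F+\Delta=\lambda\1_{B_1}$; and $F'\le \rho'\otimes \1_{B_1B_2}$ reduces on each diagonal $A_2B_2$ block to $F\le \rho\otimes \1_{B_1}$ and on each off-diagonal block to $\Delta\le (m-1)\1_{B_1}$, both of which hold since $\lambda\le 1\le m-1$. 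Therefore $C_{\rm{NS}}^{(1)}(\cN\otimes\cI_m,\e)-\log m\ge -\log\lambda$, and taking $\lambda$ toward the optimum of~\eqref{R E SDP} gives $C_{\rm{NS},a}^{(1)}(\cN,\e)\ge R(\cN,\e)$.

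The main obstacle will be the achievability construction: one has to design a correction term that promotes $\tr_{A_1}F\le \lambda\1_{B_1}$ to the strict equality $\tr_{A_1A_2}F'=(\lambda/m)\1_{B_1B_2}$ without disturbing the other three SDP constraints. The decisive observation is that $J_{\cI_m}$ is the projector onto the ``signaling'' subspace of the noiseless channel, so placing the correction entirely on its orthogonal complement $\1_{A_2B_2}-J_{\cI_m}$ keeps $\tr J_{\cN\otimes\cI_m}F'$ unchanged while providing just enough slack to equalize the $B_1B_2$ marginal. The converse inequality, by contrast, is a routine monotonicity argument once the projection $F'\mapsto \tr_{A_2B_2}(J_{\cI_m}F')$ has been identified as the natural reverse map.
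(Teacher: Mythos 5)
Your proof is correct, and while the overall architecture (a feasibility map from SDP~\eqref{R E SDP} into SDP~\eqref{prime one-shot} for $\cN\ox\cI_m$, and a reverse map for the converse) parallels the paper's structure, your converse half is a genuinely different argument. On achievability, for $m=2$ your ansatz $F'=\tfrac12 F\ox J_{\cI_2}+\tfrac12\rho\ox\Delta\ox(\1-J_{\cI_2})$ coincides exactly with the paper's construction in Lemma~\ref{CA ach}, since $J_{\cI_2}=G_{A_2B_2}$ and $\1_{A_2B_2}-J_{\cI_2}=\widetilde G_{A_2B_2}$; you generalize to arbitrary $m\ge 2$, which is harmless but unnecessary (the paper's $m=2$ already suffices). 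You (like the paper) silently use $\lambda\le 1$ to pass from $\Delta\le\lambda\1_{B_1}$ to $\Delta\le(m-1)\1_{B_1}$; this is justified because the optimal $\lambda$ equals $2^{-R(\cN,\e)}\le 1$, as $R(\cN,\e)\ge 0$ follows from $\beta_\e\le 1$ when both arguments are states. The converse is where you depart: the paper establishes the exact additivity $R(\cN\ox\cI_m,\e)=R(\cN,\e)+\log m$ by passing to the dual SDP~\eqref{R E SDP DUAL} and exhibiting a dual-feasible point, then invokes the known inequality $C_{\rm NS}^{(1)}\le R$. You instead stay entirely on the primal side, defining the reverse map $F'\mapsto\tr_{A_2B_2}\bigl(J_{\cI_m}F'\bigr)$, $\rho'\mapsto\tr_{A_2}\rho'$, and verifying (via positivity of that map, the identity $\tr_{A_2B_2}\bigl(J_{\cI_m}(\rho'\ox\1_{B_1B_2})\bigr)=\rho_{A_1}\ox\1_{B_1}$, and $J_{\cI_m}\le\1_{A_2B_2}$) that a $C_{\rm NS}^{(1)}(\cN\ox\cI_m,\e)$-feasible point with value $\eta$ is sent to an $R(\cN,\e)$-feasible point with $\lambda=m\eta$. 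Your route does not require the dual formulation, Slater's condition, or the meta-converse $C_{\rm NS}^{(1)}\le R$ as an external input, whereas the paper's route proves the additivity of $R$ over noiseless channels, a slightly stronger intermediate fact of independent interest. Both approaches are sound and of comparable difficulty.
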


The proof outline is as follows. We first show that $\cI_2$ is enough to activate the channel to achieve the bound $R(\cN,\e)$ in the following Lemma \ref{CA ach}, i.e.,
\begin{align}\label{ach}
C_{\rm{NS},a}^{(1)}(\cN,\e)
\ge  C_{\text{NS}}^{(1)}(\cN\ox \cI_2,\e)-1
\ge R(\cN,\e).
\end{align}
We then show that $R(\cN,\e)$ is additive for noiseless channel in the following Lemma \ref{CA converse}, i.e.,
\begin{align}
  R(\cN\ox \cI_m,\e)=R(\cN,\e)+\log m.
\end{align}
This implies that $R(\cN,\e)$ is also a converse bound for the activated capacity, i.e.,
\begin{align}
C_{\rm{NS},a}^{(1)}(\cN,\e)
&= \sup_{m\ge 1}\left[C_{\rm{NS}}^{(1)}(\cN\ox \cI_m,\e)-\log m\right] \\
&\le \sup_{m\ge 1}\big[R(\cN\ox \cI_m,\e)-\log m\big]\\
& = R(\cN,\e).
\end{align}
Then Theorem~\ref{NS a theorem} directly follows from Lemmas~\ref{CA ach} and~\ref{CA converse}.

\begin{lemma}\label{CA ach}
We have $C_{\rm{NS}}^{(1)}(\cN\ox \cI_2,\e)-1\ge R(\cN,\e)$.
\end{lemma}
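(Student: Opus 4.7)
The plan is to take an optimal feasible triple $(F_{AB},\rho_A,\lambda)$ of the SDP (\ref{R E SDP}) for $R(\cN,\e)$ and build an explicit feasible solution of the SDP (\ref{prime one-shot}) for $\cN\ox \cI_2$ whose objective value is exactly $\lambda/2$. Taking $-\log$ then yields $C^{(1)}_{\text{NS}}(\cN\ox \cI_2,\e)\ge R(\cN,\e)+1$ as required. The conceptual idea is to use the qubit noiseless channel $\cI_2$ as a \emph{slack absorber} that turns the inequality constraint $\tr_A F_{AB}\le \lambda \1_B$ appearing in (\ref{R E SDP}) into the equality constraint $\tr_{AA_0}\widetilde F = \widetilde\eta\,\1_{BB_0}$ demanded by (\ref{prime one-shot}).

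Concretely, set $K_B := \tr_A F_{AB}$ and $X_B := \lambda \1_B - K_B\ge 0$. On the ancillary registers $A_0 B_0$ introduce the two mutually orthogonal operators
\[
P_+ := \tfrac12(\ket{00}\bra{00}+\ket{11}\bra{11}), \qquad P_- := \tfrac12(\ket{01}\bra{01}+\ket{10}\bra{10}),
\]
and define
\[
\widetilde F_{AA_0 BB_0} := F_{AB}\ox P_+ \;+\; \rho_A\ox X_B\ox P_-, \qquad \widetilde\rho_{AA_0} := \rho_A\ox \tfrac12 \1_{A_0}, \qquad \widetilde\eta := \lambda/2.
\]
Most feasibility checks are essentially bookkeeping: $\tr_{A_0} P_\pm = \tfrac12 \1_{B_0}$, so the two summands combine to give $\tr_{AA_0}\widetilde F = (K_B+X_B)\ox \tfrac12 \1_{B_0} = \widetilde\eta\,\1_{BB_0}$; the Choi operator $J_{\cI_2}=\proj{00}+\proj{11}$ is supported where $P_+$ acts nontrivially while annihilating $P_-$, so only the first summand contributes to the success-probability constraint, reducing it to the original $\tr J_\cN F_{AB}\ge 1-\e$; and $\tr\widetilde\rho=1$ and $\widetilde F\ge 0$ are immediate.

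The one step that requires care, and the main obstacle, is the upper bound $\widetilde F\le \widetilde\rho\ox \1_{BB_0}$: a naive bound on the two summands separately would cost a factor of two and ruin the argument. This is averted by observing that $P_+$ and $P_-$ have mutually orthogonal support on $A_0 B_0$, so $\widetilde F$ is block diagonal with respect to that decomposition, and the operator inequality splits into two independent block inequalities. The first block is $F_{AB}\ox P_+\le \rho_A\ox \1_B\ox P_+$, which is exactly our hypothesis $F_{AB}\le \rho_A\ox\1_B$ tensored with $P_+\ge 0$; the second reduces to $X_B\le \1_B$, which holds because at optimality we may assume $\lambda\le 1$ (the choice $\lambda=1$ is always feasible in (\ref{R E SDP}) since $\tr_A F_{AB}\le \tr_A(\rho_A\ox \1_B)=\1_B$). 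Assembling the four verified constraints completes the proof.
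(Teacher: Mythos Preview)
Your proof is correct and is essentially the same construction as the paper's: your $P_+=\tfrac12 G$, $P_-=\tfrac12\widetilde G$, and $\rho_A\ox X_B$ is exactly the paper's $\widetilde F_{A_1B_1}$, so your $\widetilde F$ coincides with the paper's $F_{A_1A_2B_1B_2}$. You are in fact slightly more careful than the paper in one place: the verification that the second block satisfies $\rho_A\ox X_B\le \rho_A\ox\1_B$ needs $\lambda\le 1$, which you justify explicitly, whereas the paper simply asserts $\rho_{A_1}\ox\1_{B_1}-\widetilde F_{A_1B_1}\ge 0$ without comment.
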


\begin{proof}
This proof is based on a key observation that the additional one-bit noiseless channel can provide a larger solution space to help the activated capacity achieve the quantum hypothesis testing converse. The dual SDP of $R(\cN,\ve)$ is given in the following Eq.~\eqref{R E SDP DUAL}. By Slater's theorem~\cite{boyd2004convex}, the strong duality holds.  Suppose that the optimal solution to SDP (\ref{R E SDP}) of $R(\cN,\e)$ is $\{\lambda, \rho_{A_1}, F_{A_1B_1}\}$. We are going to use this optimal solution to construct a feasible solution of the SDP (\ref{prime one-shot}) of $C_{\rm{NS}}^{(1)}(\cN\ox \cI_2,\e)$.

Let us choose 
\begin{align}
  \rho_{A_1A_2}& =\rho_{A_1}\ox \frac{1}{2}(\proj{0}+\proj{1})_{A_2},\quad \text{and}\\
  F_{A_1A_2B_1B_2}  & = \frac{1}{2} F_{A_1B_1} \ox G_{A_2B_2} + \frac{1}{2} \widetilde F_{A_1B_1} \ox \widetilde G_{A_2B_2},
\end{align}
\begin{align}
\text{with} \quad \quad \quad  G_{A_2B_2} & = (\ket {00} \bra {00}+\ket{11}\bra{11})_{A_2B_2},\\
  \widetilde G_{A_2B_2} & = (\ket {01} \bra{01} +\ket{10}\bra{10})_{A_2B_2},\\
  \widetilde F_{A_1B_1} & =\rho_{A_1}\ox (\lambda\1_{B_1}-\tr_{A_1} F_{A_1B_1}).
\end{align}
We see that $F_{A_1A_2B_1B_2} \geq 0$, $\rho_{A_1A_2}\ge 0$ and $\tr \rho_{A_1A_2}=1$.
Moreover, this construction ensures that
\begin{align}\label{con 2}
& \tr_{A_1A_2}F_{A_1A_2B_1B_2}\notag\\
& \quad = \frac{1}{2} \tr_{A_1}\left[( F_{A_1B_2} +\widetilde F_{A_1B_1}) \ox \1_{B_2}\right] =\frac{\lambda}{2}\1_{B_1B_2},
\end{align}
and
\begin{align}
&\tr (J_{\cN}\ox D_{A_2B_2})F_{A_1A_2B_1B_2}\notag\\
& \quad = \frac12 \tr J_{\cN}F_{A_1B_1}\otimes  \tr D_{A_2B_2} G_{A_2B_2}\\
& \quad = \tr J_{\cN}F_{A_1B_1}\ge 1-\e,
\end{align}
where $D_{A_2B_2}=\sum_{i=0}^1\proj{ii}$ is the Choi-Jamio\l{}kowski matrix of $\cI_2$.
Furthermore, $\rho_{A_1}\ox \1_{B_1}-\widetilde F_{A_1B_1}\ge 0$ and consequently we find that $\rho_{A_1A_2}\ox \1_{B_1B_2}-F_{A_1A_2B_1B_2} \geq 0$.
Hence, $\left\{\frac12 \lambda,\rho_{A_1A_2}, F_{A_1A_2B_1B_2} \right\}$ is a feasible solution, ensuring that
$C_{\rm{NS}}^{(1)}(\cN\ox \cI_2,\e)-1\ge R(\cN,\e)$.
\end{proof}

\begin{lemma} \label{CA converse}
We have $R(\cN\ox \cI_m,\e)=R(\cN,\e)+\log m$.
\end{lemma}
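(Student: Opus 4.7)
The plan is to prove the equality by establishing the two inequalities $R(\cN \otimes \cI_m, \e) \ge R(\cN, \e) + \log m$ and $R(\cN \otimes \cI_m, \e) \le R(\cN, \e) + \log m$ separately, in both cases by constructing explicit feasible triples for the SDP (\ref{R E SDP}). Throughout I write $D_{A_2B_2} := \sum_{i=1}^m |ii\rangle\langle ii|$ for the Choi matrix $J_{\cI_m}$; note that $D$ is a rank-$m$ projector satisfying $\tr_{A_2} D = \1_{B_2}$ and $\tr D^2 = m$.

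For the achievability ($\ge$) direction I would start from an optimal triple $(\lambda^*, \rho_A, F_{AB})$ for $R(\cN, \e)$ and tensor it with the obvious optimizer for $R(\cI_m, \e)$: set $\widetilde\rho_{A_1A_2} := \rho_{A_1} \otimes \tfrac{1}{m}\1_{A_2}$ and $\widetilde F_{A_1A_2B_1B_2} := F_{A_1B_1} \otimes \tfrac{1}{m} D_{A_2B_2}$. Verification of the SDP constraints for $R(\cN\otimes\cI_m,\e)$ is mechanical: positivity and normalization are immediate; $\widetilde F \le \widetilde\rho \otimes \1$ follows from $F \le \rho \otimes \1$ together with $D \le \1_{A_2B_2}$; the success constraint uses $\tr (J_\cN \otimes D)\widetilde F = \tfrac{1}{m} \tr(J_\cN F)\, \tr D^2 = \tr J_\cN F \ge 1-\e$; and the partial-trace bound gives the new value $\widetilde\lambda = \lambda^*/m$, so $R(\cN\otimes\cI_m,\e) \ge R(\cN,\e) + \log m$.

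The substantive step is the converse ($\le$) direction. Given any feasible triple $(\lambda, \rho_{A_1A_2}, F_{A_1A_2B_1B_2})$ for $R(\cN\otimes\cI_m, \e)$, I would construct $\rho'_{A_1} := \tr_{A_2} \rho_{A_1A_2}$ and $F'_{A_1B_1} := \tr_{A_2B_2}[D_{A_2B_2} F]$, aiming for a feasible triple for $R(\cN,\e)$ with bound $\lambda' = m\lambda$. Positivity of $F'$, normalization of $\rho'$, and the success constraint $\tr J_\cN F' = \tr\bigl[(J_\cN \otimes D)F\bigr] = \tr J_{\cN\otimes\cI_m} F \ge 1-\e$ are straightforward. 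The bound $F' \le \rho' \otimes \1_{B_1}$ follows by sandwiching $F \le \rho_{A_1A_2}\otimes\1_{B_1B_2}$ with $D^{1/2}_{A_2B_2}$ on both sides and taking partial trace over $A_2B_2$ (noting that partial trace with operators supported on the traced-out spaces is cyclic), then computing directly that $\tr_{A_2B_2}[D(\rho_{A_1A_2}\otimes\1_{B_1B_2})] = \rho'_{A_1} \otimes \1_{B_1}$.

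The main obstacle is the remaining partial-trace constraint $\tr_{A_1} F' \le m\lambda\,\1_{B_1}$. Expanding the projector gives $\tr_{A_1} F' = \sum_{i=1}^m \langle ii|_{A_2B_2}(\tr_{A_1} F)|ii\rangle_{A_2B_2}$, which picks out only the $(i,i)$-diagonal entries of the PSD operator $\tr_{A_1}F$ in the $A_2B_2$ computational basis. Since $\tr_{A_1}F \ge 0$, the operator inequality $\sum_{i=1}^m \langle ii|Y|ii\rangle \le \sum_{j,k}\langle jk|Y|jk\rangle = \tr_{A_2B_2} Y$ holds for all such $Y$, so $\tr_{A_1}F' \le \tr_{A_1A_2B_2} F$. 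Combining with the original constraint $\tr_{A_1A_2} F \le \lambda \1_{B_1B_2}$ and tracing out $B_2$ yields $\tr_{A_1A_2B_2} F \le m\lambda\,\1_{B_1}$, which closes the loop. Together with the achievability direction this proves $R(\cN\otimes\cI_m, \e) = R(\cN,\e) + \log m$.
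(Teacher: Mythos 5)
Your proof is correct, and for the nontrivial $\le$ direction you take a genuinely different route from the paper. The paper proves $R(\cN\ox\cI_m,\e)\le R(\cN,\e)+\log m$ by passing to the dual SDP~\eqref{R E SDP DUAL}: an optimal dual solution $\{\widehat X,\widehat Y,\widehat s,\widehat t\}$ for $R(\cN,\e)$ is lifted to a dual-feasible point for $R(\cN\ox\cI_m,\e)$ via $X=\frac1m\widehat X\ox D_m$, $Y=\frac1m\widehat Y\ox\1_m$, $s=\widehat s/m$, $t=\widehat t/m$, immediately giving the bound. You instead work entirely with the primal SDP~\eqref{R E SDP}, taking a feasible triple for $R(\cN\ox\cI_m,\e)$ and compressing it onto the $\cN$-side by setting $\rho'_{A_1}=\tr_{A_2}\rho$ and $F'_{A_1B_1}=\tr_{A_2B_2}[D\,F]$; the key observation that makes the partial-trace constraint close is $\sum_i\<ii|Y|ii\>\le\tr_{A_2B_2}Y$ for PSD $Y$, combined with tracing out $B_2$ in the original constraint to pick up the factor of $m$. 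Both arguments are valid; the paper's dual lifting is cleaner to verify because tensoring with a fixed operator preserves dual feasibility almost for free, while your primal compression is more self-contained (it does not require writing down or appealing to the dual SDP and strong duality) and makes the information-theoretic intuition — project onto the diagonal of the noiseless catalyst — more visible. You also spell out the easy $\ge$ direction with the natural tensor construction, which the paper leaves as an assertion.
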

\begin{proof}
On the one hand, it is easy to prove that $R(\cN\ox \cI_m,\e)\ge R(\cN,\e)+\log m$.
To see the other direction, we are going to use the
dual SDP of $R(\cN,\e)$:
\begin{align}\hspace{-0.6cm}
R(\cN,\e)=
-\log\ \underset{X_{AB},\,Y_B,\,s,\,t}{\maximize} &\ \ [s(1-\e)-t] \notag\\
 \text{subject to}  &\ \ X_{AB}+\1_A\ox Y_B \ge sJ_{\cN},\notag\\
&\ \tr_{B}X_{AB}\le t\1_A,\label{R E SDP DUAL}\\
& \ \tr Y_B\le 1, \notag\\
&\ \ X_{AB},\,Y_{B},\,s\ge 0. \notag
\end{align}
We note that the strong duality holds here by Slater's theorem~\cite{boyd2004convex}.
Suppose that the optimal solution to the dual SDP (\ref{R E SDP DUAL}) of $R(\cN,\e)$ is 
$\{\widehat X_{AB}, \widehat Y_{B},  \widehat s, \widehat t \ \}$.
Let us choose
$X_{AA'BB'}=\frac{1}{m}\widehat{X}_{AB}\ox D_m,$
$Y_{BB'}=\frac{1}{m}\widehat{Y}_B\ox \1_m,$
$s=\frac{1}{m}\widehat{s},$ $t=\frac{1}{m}\widehat{t},$ 
with $D_{m}=\sum_{i=0}^{m-1}\proj{ii}.$
Then it can be easily checked that 
\begin{align}\label{R PPTD con 1}
& X_{AA'BB'}+\1_{AA'}\ox Y_{BB'}\notag\\ 
&\quad \quad \ge (\widehat{X}_{AB}+\1_A\ox\widehat Y_{B})\ox \frac{D_m}{m} \ge sJ_{\cN}\ox D_m.
\end{align}
The other constraints can be verified similarly. Thus, $\{X_{AA'BB'},  Y_{BB'}, s, t\}$ is a feasible solution to the SDP (\ref{R E SDP DUAL}) of $R(\cN\ox \cI_m,\e)$, which implies that 
\begin{align}
R(\cN\ox \cI_m,\e) & \le -\log[s(1-\e)-t]
 =R(\cN,\e)+\log m,\notag
\end{align}
and completes the proof.
\end{proof}

\section{New meta-converse for unassisted classical communication}\label{sec: new meta converse}

In the following we will use the concept of \emph{subchannels}. Denote  $\cS(A):= \left\{ \rho_A \geq 0 \ |\, \tr \rho_A = 1\right\}$ as the set of quantum states on $A$. A subchannel $\cN_{A\to B}$ is a CP linear map that is trace non-increasing, i.e., $\tr\cN(\rho) \le 1$ for all quantum states $\rho \in \cS(A)$.

Recall that the only useless quantum channel for classical communication is the \emph{constant channel} $\cN(\cdot)=\sigma$~\cite{Holevo1973,Holevo1998,Schumacher1997,Schumacher2001,Sharma2013a} , which maps all states $\rho$ on $A$ to a constant state $\sigma$ on $B$. As a natural extension, we say a subchannel $\cN$ is \emph{constant-bounded} if it maps all states $\rho$ to positive definite operators that are smaller than or equal to a constant state $\sigma$, i.e., 
\begin{align}
\cN(\rho)\le \sigma, \forall \rho\in\cS(A).
\end{align}
We also define the set of constant-bounded subchannels as
$\cV:=\big\{\cM\in \text{CP}(A:B) \,\big|\ \exists\ \sigma\in \cS(B) \text{ s.t. } \cM(\rho)\le \sigma, \forall \rho\in\cS(A) \big\}$,
where
$\text{CP}(A:B)$ denotes the set of all CP linear maps from $A$ to $B$. Clearly, the set $\cV$ is convex and closed.
This inspires the following new one-shot converse bound.
\begin{theorem}\label{oneshot metaconverse}
For any quantum channel $\cN_{A'\to B}$ and error tolerance $\e \in (0,1)$, we have
\begin{align}
& C^{(1)}(\cN,\e)\notag\\
& \ \le \max_{\rho_{A'}} \min_{\cM \in \cV}  D_H^{\e}(\cN_{A'\to B}(\phi_{A'A})\|\cM_{A'\to B}(\phi_{A'A}))\label{meta-converse max min new}\\
&\ = \min_{\cM \in \cV} \max_{\rho_{A'}}  D_H^{\e}(\cN_{A'\to B}(\phi_{A'A})\|\cM_{A'\to B}(\phi_{A'A})),\label{meta-converse min max new}
\end{align}
where $\phi_{A'A}$ is a purification of $\rho_{A'}$.
\end{theorem}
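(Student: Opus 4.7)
My plan is to prove the theorem in two steps: first establish the upper bound $C^{(1)}(\cN,\e) \le \max_{\rho_{A'}}\min_{\cM \in \cV} D_H^\e(\cN(\phi)\|\cM(\phi))$ directly from a generic code, then invoke a minimax argument to exchange the order of the maximum and minimum. I expect the main technical obstacle to be verifying the joint concave/convex structure of $D_H^\e$ that Sion's minimax theorem requires.

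For the converse bound I would adapt the Polyanskiy--Poor--Verd\'u argument. Given any unassisted code of size $m$ with codewords $\{\rho_i\}_{i=1}^m \subset \cS(A')$ and decoding POVM $\{E_i\}_{i=1}^m$ achieving success probability at least $1-\e$, set $\bar\rho_{A'} := \frac{1}{m}\sum_i \rho_i$ and let $\phi_{A'A}$ be its canonical purification. I would then construct the test operator
\begin{equation*}
Q_{AB} := \frac{1}{m}\sum_{i=1}^m \bigl(\bar\rho^{-1/2}\rho_i\bar\rho^{-1/2}\bigr)^T_{A} \ox (E_i)_{B},
\end{equation*}
with the transpose taken in the Schmidt basis of $\phi_{A'A}$ (the inequality $\rho_i \le m\bar\rho$ allows restricting everything to the support of $\bar\rho$ when it is rank-deficient). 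Three short checks verify that (i) $0 \le Q \le \1$, using $M_i := (\bar\rho^{-1/2}\rho_i\bar\rho^{-1/2})^T \ge 0$ with $\sum_i M_i = \1$ together with $\sum_i E_i \le \1$; (ii) via the identity $\tr[(M_A \ox N_B)\cN(\phi_{A'A})] = \tr[N\,\cN(\bar\rho^{1/2} M^T \bar\rho^{1/2})]$, one gets $\tr[Q\,\cN(\phi_{A'A})] = \tfrac{1}{m}\sum_i \tr[E_i\,\cN(\rho_i)] \ge 1-\e$; and (iii) for any $\cM \in \cV$ with bounding state $\sigma$, $\tr[Q\,\cM(\phi_{A'A})] = \tfrac{1}{m}\sum_i \tr[E_i\,\cM(\rho_i)] \le \tfrac{1}{m}\tr[\sigma \sum_i E_i] \le \tfrac{1}{m}$. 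Hence $\beta_\e(\cN(\phi)\|\cM(\phi)) \le 1/m$, giving $\log m \le D_H^\e(\cN(\phi)\|\cM(\phi))$ for every $\cM \in \cV$; optimizing over $\rho_{A'}$ and $\cM$ yields \eqref{meta-converse max min new}.

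For the minimax equality I would appeal to Sion's theorem. The state set $\cS(A')$ is convex and compact; $\cV$ is convex (if $\cM_k \in \cV$ has bounding state $\sigma_k$, then $\sum_k \lambda_k \cM_k \in \cV$ with bounding state $\sum_k \lambda_k \sigma_k$), closed as remarked in the excerpt, and bounded since $\tr\cM(\rho) \le 1$ for every state $\rho$ controls $J_\cM$, so it is compact. Using the transpose trick $|\phi_\rho\rangle = (\rho^{T/2}_A \ox \1_{A'})\sum_i|ii\rangle$ one rewrites $\cN(\phi_\rho) = (\rho^{T/2}_A\ox\1_B)J_\cN(\rho^{T/2}_A\ox\1_B)$ and similarly for $\cM$; the substitution $\widetilde Q := (\rho^{T/2}\ox\1) Q (\rho^{T/2}\ox\1)$ then recasts $\beta_\e(\cN(\phi_\rho)\|\cM(\phi_\rho))$ as the minimum of $\tr[\widetilde Q\,J_\cM]$ subject to $0 \le \widetilde Q \le \rho^T \ox \1$ and $\tr[\widetilde Q\,J_\cN] \ge 1-\e$. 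The objective is linear in $J_\cM$, so $-\log\beta_\e$ is convex in $\cM$; the feasible region is linear in $\rho$, so convex-combining feasible $\widetilde Q$'s shows $\beta_\e$ is convex in $\rho$ and thus $-\log\beta_\e$ is quasi-concave in $\rho$. Together with upper semicontinuity, Sion's minimax theorem then delivers \eqref{meta-converse min max new}. The only subtlety I foresee is handling rank-deficient $\bar\rho$ or boundary points of $\cS(A')$, which can be addressed by restricting to supports or by a standard perturbation argument.
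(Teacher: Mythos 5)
Your proof is correct and takes essentially the same route as the paper: you construct the identical test operator $Q = (\rho_A^T)^{-1/2}\bigl(\frac{1}{m}\sum_i \rho_i^T \ox E_i\bigr)(\rho_A^T)^{-1/2}$, perform the same three checks $0\le Q\le\1$, $\tr[Q\,\cN(\phi)]\ge 1-\e$, $\tr[Q\,\cM(\phi)]\le 1/m$, and then invoke Sion's minimax theorem. The only differences are cosmetic: you spell out the convexity/concavity structure of $\beta_\e$ directly (the paper cites Matthews--Wehner for it), and you flag the rank-deficient support issue; also note a small slip, $\sum_i M_i = m\1$ rather than $\1$, which is absorbed by the $1/m$ prefactor in $Q$ and does not affect the argument.
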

\begin{proof}
Consider an unassisted code with inputs $\{\rho_k\}_{k=1}^m$ and POVM $\{M_k\}_{k=1}^m$ whose average input state is $\rho_{A'}= \sum_{k=1}^{m}\frac{1}{m}\rho_k$, the success probability to transmit $m$ messages is given by
\begin{align}
p_{\rm succ} & = \frac1m \sum_{k=1}^m \tr \cN(\rho_k) M_k\\
&  = \tr J_\cN \Big(\sum_{k=1}^{m}\frac{1}{m}\rho_k^T\ox M_k\Big)\\
&= \tr \cN_{A' \to B}(\phi_{AA'}) E,\label{test condition 1}
\end{align}
where 
\begin{align}
  E :=(\rho_A^T)^{-1/2}( \sum_{k=1}^{m}\frac{1}{m}\rho_k^T\ox M_k) (\rho_A^T)^{-1/2}.
\end{align} 
Then we have
\begin{align}
0 \leq E \le(\rho_A^T)^{-1/2} \left(\sum_{k=1}^{m}\frac{1}{m}\rho_k^T\ox \1_B\right) (\rho_A^T)^{-1/2}=\1_{AB}.
\end{align}
Let us fix $\cM \in \cV$ and assume that the output states of $\cM$ are bounded by the state $\sigma_B$, then
\begin{align}
\tr \cM_{A'\to B}(\phi_{AA'})E
=&\tr J_\cM( \sum_{k=1}^{m}\frac{1}{m}\rho_k^T\ox M_k)\\
 = &\frac{1}{m}\sum_{k=1}^{m}\tr \cM(\rho_k) M_k\\
 \le &  \frac{1}{m}\sum_{k=1}^{m}\tr \sigma_B M_k= \frac{1}{m}. \label{test condition 2}
\end{align}
The second line follows from the fact that $J_\cM = (\rho_A^T)^{-1/2} \cM_{A'\to B}(\phi_{AA'})(\rho_A^T)^{-1/2}$. In the third line, we use the inverse Choi-Jamio\l{}kowski transformation $\cM_{A'\to B}(\rho_{A'}) = \tr_A J_{\cM} (\rho_A^T \ox \1_B)$. The forth line follows since any output state of $\cM$ is bounded by the state $\sigma_B$.
Therefore, combining Eqs.~\eqref{test condition 1} and \eqref{test condition 2}, we know that $\tr \cN_{A'\to B}(\phi_{AA'}) E \geq 1-\ve$ and $\tr \cM_{A'\to B}(\phi_{AA'}) E \leq \frac1m$. Thus 
$C^{(1)}(\cN,\rho_{A'},\ve)\le \min_{\cM\in\cV} D_H^\ve(\cN_{A'\to B}(\phi_{AA'})\|\cM_{A'\to B}(\phi_{AA'})).
$
Maximizing over all average input $\rho_{A'}$, we can obtain the desired result of~\eqref{meta-converse max min new}.

Since $\b_\ve(\cN_{A'\to B}(\phi_{A'A})\|\cM_{A'\to B}(\phi_{A'A}))$ is convex in $\rho_{A'}$ and concave in $\cM$ \cite{Matthews2014}, we can exchange the maximization and  minimization by applying Sion's minimax theorem \cite{Sion1958}  and obtain the result of~\eqref{meta-converse min max new}.
\end{proof}

\begin{remark}
Noting that $E$ above also satisfies $0\le E^{T_B}\le \1$, we can further obtain an upper bound of $C^{(1)}(\cN,\e)$ as
\begin{align}\hspace{-0.2cm}
\max_{\rho_{A'}}\min_{\cM \in \cV}  D_{H,PPT}^{\e}(\cN_{A'\to B}(\phi_{A'A})\big\|\cM_{A'\to B}(\phi_{A'A})),
\end{align}
where $D_{H,PPT}^\e(\rho_0\|\rho_1)$ is defined as the optimal value of
\begin{align}
   -\log\min \{\tr E\rho_1\big| 1-\tr E\rho_0\le\e, 0\le E, E^{T_B}\le \1 \}.
\end{align}
\end{remark}
If we consider 
$\max_{\rho_{A'}} D_H^{\e}(\cN_{A'\to B}(\phi_{A'A})\|\cM_{A'\to B}(\phi_{A'A}))$ as the ``distance'' between the channel $\cN$ and CP map $\cM$, then our new meta-converse can be treated as the ``distance'' between the given channel $\cN$ with the set of all constant-bounded subchannels.

To make this meta-converse bound efficiently computable, we can restrict the set of constant-bounded subchannels $\cV$ to an SDP-tractable set of CP maps. 
Let us define
\begin{align}
\cV_\b:=\{\cM\in \text{CP}(A:B) \ |\ \b(J_\cM)\le 1\},
\end{align}
where $\b(J_{\cM})$ is given by the following SDP
\begin{align}
\b(J_\cM) := \underset{S_B, R_{AB}}{\text{minimize}} &\ \tr S_B \notag\\
 \text{subject to} &\ -R_{AB}\le J_{\cM}^{T_B}\le R_{AB},\label{sdp_beta}\\
    &\ -\1_A\ox S_B\le R_{AB}^{T_B}\le \1_A\ox S_B. \notag
\end{align}
Here $J_\cM$ is the Choi-Jamio\l{}kowski matrix of $\cM$ and $T_B$ means the partial transpose on system~$B$. We note that  $\b(\cdot)$ for a quantum channel $\cN$ is faithful in the sense that $\b(J_\cN)=1$ if and only if $C(\cN)=0$~\cite{Wang2016g}. Thus the set $\cV_\b$ contains all the constant channels, which makes it reasonable, to some extent, to introduce the set $\cV_\b$ here. 
Moreover, the set $\cV_\b$ also satisfies some basic properties such as convexity and invariance under composition with unitary maps. These are shown in Appendix \ref{app: MB}.

\begin{lemma}\label{sdp subset lemma}
The set $\cV_\beta$ is a subset of $\cV$, i.e., $\cV_\b\subseteq \cV$.
\end{lemma}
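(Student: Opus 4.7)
The goal is to show that every $\cM \in \cV_\beta$ is constant-bounded, i.e., that there is a state $\sigma_B$ with $\cM(\rho) \le \sigma_B$ for every $\rho \in \cS(A)$. I would take an optimal feasible point $(R_{AB}, S_B)$ of the SDP in~\eqref{sdp_beta} with $\tr S_B \le 1$ and try to show $\cM(\rho) \le S_B$ directly. The intuition is that the two SDP constraints $J_\cM^{T_B} \le R_{AB}$ and $R_{AB}^{T_B} \le \1_A \ox S_B$ form a chain that, after taking the partial trace against $\rho^T_A \ox \1_B$, collapses $J_\cM$ to $\cM(\rho)$ on one end and $\1_A \ox S_B$ to $S_B$ on the other.

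The key technical ingredient is the identity
\begin{equation*}
\tr_A\bigl[X_{AB}^{T_B}(\omega_A \ox \1_B)\bigr] \;=\; \bigl(\tr_A[X_{AB}(\omega_A \ox \1_B)]\bigr)^T,
\end{equation*}
which I would prove by noting that $\omega_A \ox \1_B$ is invariant under $T_B$, so $X_{AB}^{T_B}(\omega_A \ox \1_B) = \bigl(X_{AB}(\omega_A \ox \1_B)\bigr)^{T_B}$, followed by the elementary fact $\tr_A(Y^{T_B}) = (\tr_A Y)^T$. Applied to $X = J_\cM$, this gives $\tr_A[J_\cM^{T_B}(\rho^T_A \ox \1_B)] = \cM(\rho)^T$.

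Next I would convert each SDP operator inequality into a $\tr_A$-inequality by sandwiching with $Z^{1/2}$ where $Z = \rho^T_A \ox \1_B \ge 0$; this preserves operator order, and because $Z$ is an $A$-only factor, $\tr_A[Z^{1/2} X Z^{1/2}] = \tr_A[X Z]$ by $A$-side cyclicity. From $J_\cM^{T_B} \le R_{AB}$ I thus get $\cM(\rho)^T \le \tr_A[R_{AB}(\rho^T_A \ox \1_B)]$. Applying the identity once more to rewrite $\tr_A[R_{AB}(\rho^T_A \ox \1_B)] = \bigl(\tr_A[R_{AB}^{T_B}(\rho^T_A \ox \1_B)]\bigr)^T$ and transposing yields
\begin{equation*}
\cM(\rho) \;\le\; \tr_A\bigl[R_{AB}^{T_B}(\rho^T_A \ox \1_B)\bigr].
\end{equation*}
A second sandwich argument using $R_{AB}^{T_B} \le \1_A \ox S_B$ bounds the right-hand side by $\tr_A[(\1_A \ox S_B)(\rho^T_A \ox \1_B)] = \tr(\rho^T)\,S_B = S_B$. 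Together this gives $\cM(\rho) \le S_B$. Since the SDP constraints also force $S_B \ge 0$ (adding $\pm R_{AB}^{T_B} \le \1_A \ox S_B$), and $\tr S_B \le 1$, I finally set $\sigma_B := S_B + (1-\tr S_B)\tau_B$ for any state $\tau_B$; then $\sigma_B$ is a state and $\cM(\rho) \le S_B \le \sigma_B$, so $\cM \in \cV$.

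The main conceptual obstacle is that partial transpose is not positivity-preserving, so one cannot simply apply $T_B$ to the inequality $R_{AB}^{T_B} \le \1_A \ox S_B$ to pass from $R_{AB}^{T_B}$ back to $R_{AB}$. The workaround is exactly the identity $\tr_A \circ T_B = T \circ \tr_A$: the partial transpose becomes a harmless full transpose only \emph{after} partial-tracing out $A$, and this is what links the two SDP constraints into a single operator bound on $\cM(\rho)$.
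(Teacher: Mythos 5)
Your argument is correct and follows essentially the same route as the paper: take an optimal feasible pair $(R_{AB},S_B)$ for the SDP defining $\b(J_\cM)$, chain the two constraints through the identity $\tr_A\circ T_B = T\circ\tr_A$ applied after tracing against $\rho_A^T$, and conclude $\cM(\rho)\le S_B$ with $\tr S_B\le 1$. The only cosmetic difference is that the paper sandwiches with $\sqrt{\rho_A^T}$ instead of multiplying by $\rho_A^T\ox\1_B$ (equivalent under $A$-cyclicity of the partial trace), and you additionally spell out the (correct, and strictly speaking needed) last step of padding $S_B$ to a genuine state $\sigma_B$, which the paper leaves implicit.
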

\begin{proof}
Note that the strong duality of SDP~\eqref{sdp_beta} holds due to the Slater's theorem~\cite{boyd2004convex}.
Given a CP map $\cM$ in $\cV_\b$, we suppose that the optimal solution of $\b(J_\cM)$ is $\{R_{AB}, S_B\}$. Then, we know $\b(J_\cM)=\tr S_B\le 1$.
Furthermore, for any input $\rho_A$, the output $\cM(\rho_A)$ satisfies that 
\begin{align}
\cM_{A\to B}(\rho_A) & = \tr_A \sqrt{\rho_A^{T}} J_\cM \sqrt{\rho_A^{T}}\\
& = (\tr_A \sqrt{\rho_A^{T}} J_\cM^{T_B} \sqrt{\rho_A^{T}})^{T}\\
& \leq (\tr_A \sqrt{\rho_A^{T}} R_{AB} \sqrt{\rho_A^{T}})^{T}\\
& = \tr_A \sqrt{\rho_A^{T}} R_{AB}^{T_B} \sqrt{\rho_A^{T}}\\
&\le \tr_A \sqrt{\rho_A^{T}} (\1_A \ox S_B) \sqrt{\rho_A^{T}}\\
& =S_B.
\end{align}
\end{proof}

As a consequence of Theorem~\ref{oneshot metaconverse} and Lemma~\ref{sdp subset lemma}, we have the following meta-converse.
\begin{theorem}
\label{meta converse main theorem}
 For any quantum channel $\cN_{A'\to B}$ and error tolerance $\e \in (0,1)$, we have
 \begin{align}
 & C^{(1)}(\cN,\e) \notag\\
& \le \max_{\rho_{A'}}\min_{\cM \in \cV_\b}  D_H^{\e}(\cN_{A'\to B}(\phi_{A'A})\|\cM_{A'\to B}(\phi_{A'A}))\label{meta-converse max min}\\
  & = \min_{\cM \in \cV_\b} \max_{\rho_{A'}}  D_H^{\e}(\cN_{A'\to B}(\phi_{A'A})\|\cM_{A'\to B}(\phi_{A'A})),\label{meta-converse min max}
\end{align}
where $\phi_{A'A}$ is a purification of $\rho_{A'}$. 
Note that this bound can be computed via SDP (see Appendix~\ref{app:meta converse SDP}).
\end{theorem}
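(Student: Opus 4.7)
The plan is to deduce the theorem as a direct consequence of Theorem~\ref{oneshot metaconverse} and Lemma~\ref{sdp subset lemma}, finishing with a minimax exchange. First, I would use the inclusion $\cV_\b \subseteq \cV$ from Lemma~\ref{sdp subset lemma}: for any fixed input $\rho_{A'}$, restricting the inner feasible set to the smaller class $\cV_\b$ can only raise the minimum, so
\begin{align}
\min_{\cM \in \cV} D_H^{\e}(\cN_{A'\to B}(\phi_{A'A})\|\cM_{A'\to B}(\phi_{A'A})) \le \min_{\cM \in \cV_\b} D_H^{\e}(\cN_{A'\to B}(\phi_{A'A})\|\cM_{A'\to B}(\phi_{A'A})).
\end{align}
Maximising over $\rho_{A'}$ on both sides and chaining with the bound from Theorem~\ref{oneshot metaconverse} immediately yields the upper bound~\eqref{meta-converse max min} on $C^{(1)}(\cN,\e)$.

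Second, for the equality between~\eqref{meta-converse max min} and~\eqref{meta-converse min max}, I would mirror the Sion minimax argument used at the end of the proof of Theorem~\ref{oneshot metaconverse}, now with $\cV$ replaced by $\cV_\b$. The prerequisites transfer cleanly: $\b_\e$ is convex in $\rho_{A'}$ and concave in $\cM$ by~\cite{Matthews2014}; the state set is convex and compact; and $\cV_\b$ is convex (recorded in Appendix~\ref{app: MB}) as well as compact, since the constraint $\b(J_\cM)\le 1$ together with the SDP~\eqref{sdp_beta} bounds $\|J_\cM\|$. Strict monotonicity of $-\log$ then transports the exchange from $\b_\e$ to $D_H^\e$.

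Finally, the explicit SDP promised in the last sentence of the theorem is obtained by substituting the SDP~\eqref{sdp_beta} for $\b(J_\cM)$ into the definition of $\cV_\b$ and combining it with the standard SDP for $\b_\e(\rho_0\|\rho_1)$ and with the outer maximisation over $\rho_{A'}$; the routine bookkeeping is deferred to Appendix~\ref{app:meta converse SDP}. The heavy lifting---both the identification of $\cV$ as the correct comparison class and the semi-definite relaxation to $\cV_\b$---has already been performed in Theorem~\ref{oneshot metaconverse} and Lemma~\ref{sdp subset lemma}, so the only obstacle here is the mild one of verifying that the minimax swap still applies after the restriction to $\cV_\b$, which as noted above reduces to convexity and compactness of $\cV_\b$.
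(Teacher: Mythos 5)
Your proposal is correct and follows exactly the route the paper intends: Theorem~\ref{oneshot metaconverse} gives the bound over $\cV$, Lemma~\ref{sdp subset lemma} supplies $\cV_\b\subseteq\cV$ to pass to the computable relaxation, and Sion's minimax theorem (applied to $\b_\e$ and transported through $-\log$) yields the max-min/min-max equality. The only superfluous piece is your compactness claim for $\cV_\b$: Sion's theorem only needs compactness of the set being minimized over with the saddle variable convex in it, so compactness of the state set $\cS(A')$ already suffices and convexity of $\cV_\b$ is all that is required on the other side.
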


There are several other converses for the one-shot $\ve$-error capacity of a general quantum channel, e.g., the Matthews-Wehner converse~\cite{Matthews2014}, the Datta-Hsieh converse~\cite{Datta2013c}, and the recent SDP converse via no-signaling (NS) and positive-partial-transpose-preserving (PPT) codes~\cite{Wang2016g}. Note that the Datta-Hsieh converse is not known to be efficiently computable. Also, our meta-converses in Theorem~\ref{oneshot metaconverse} and~\ref{meta converse main theorem} are always tighter than the Matthews-Wehner converse in Eq.~\eqref{eq: DH NS 1} since we can rewrite $R(\cN,\ve)$ as 
\begin{align}
\max_{\rho_{A'}} \min_{\cM \in \cW}  D_H^{\e}(\cN_{A'\to B}(\phi_{A'A})\|\cM_{A'\to B}(\phi_{A'A})),
\end{align}
 where $\cW$ is the set of all constant channels and $\cW \subsetneq \cV_\b \subsetneq \cV$. But our relaxed meta-converse in Theorem~\ref{meta converse main theorem} is no tighter than the SDP converse via NS and PPT codes (cf. Theorem 4 in~\cite{Wang2016g}). 

As we will show later, our meta-converse will lead to new results in both the finite blocklength and asymptotic regimes. In particular, our new bounds allow us to establish finite blocklength analysis for quantum channels beyond classical-quantum channels (cf. Section~\ref{sec: erasure channel}), which haven't been done via previous converse bounds.  

\section{Comparison of asymptotic converse bounds}\label{sec: upsilon information}

By substituting the relative entropy for the hypothesis testing relative entropy in our meta-converse we define the following quantity, which we call the \emph{$\U$-information} of the channel $\cN$,
\vspace{-0.2cm}
\begin{align}\hspace{-0.1cm}
  \U(\cN)  &:=\notag\\
  & \max_{\rho_{A'}} \min_{\cM \in \cV} D(\cN_{A'\to B}(\phi_{A'A})\| \cM_{A'\to B}(\phi_{A'A})),
\end{align}
where the relative entropy is defined as $D(\rho\|\sigma) := \tr \rho (\log \rho - \log \sigma)$ if $\supp\, \rho \subseteq \supp\, \sigma$  and $+\infty$ otherwise. We also introduce its regularization, 
\begin{align}
  \U^{\infty}(\cN) := \limsup_{n\to \infty} \frac{1}{n}\U(\cN^{\ox n}).
\end{align}

Recently, one of us and his collaborators~\cite{Wang2016g} derived an SDP strong converse bound $C_\beta(\cN)$ for the classical capacity of a general quantum channel, which means that any code with a rate exceeding this bound will have a vanishing success probability.
To be specific, for any quantum channel $\cN$, it holds that 
$C(\cN)\le C_{\b}(\cN) :=\log  \b(J_\cN).$
In this section our goal is to compare $\U$ and $\U^\infty$ with other known quantities:  the Holevo capacity $\chi$, the classical capacity $C$ (or regularized Holevo capacity), the entanglement-assisted classical capacity $C_E$, and the strong converse bound $C_\beta$. The graph of relations among these quantities is displayed in Fig. \ref{relation graph}.

\begin{figure}[H] 
  \centering
  \includegraphics{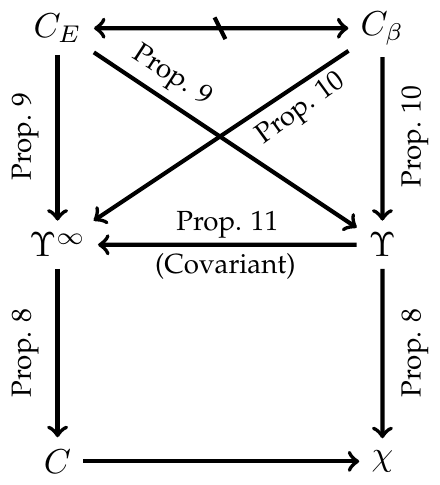}
\caption{Relation graph of converse bounds. An arrow $A \longrightarrow B$ indicates that $A(\cN) \geq B(\cN)$ for any channel~$\cN$. $A\  \protect\foo{scale=0.4}\ B$ indicates that $A$ and $B$ are not comparable, i.e, $A(\cN) > B(\cN)$ for some channel $\cN$ and  $A(\cM) < B(\cM)$ for some channel $\cM$.}
\label{relation graph}
\end{figure}

\begin{proposition}
\label{converse for holevo}
   For any quantum channel $\cN$, we have 
   \begin{align}
     \chi(\cN) \leq \U(\cN) \quad \text{and} \quad C(\cN)\le \U^{\infty}(\cN).
   \end{align}
\end{proposition}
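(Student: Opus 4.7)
The plan is to reduce $\Upsilon(\cN)$ to the Holevo information of an arbitrary ensemble via a measurement on the purifying system, and then invoke HSW for the regularized statement. Fix any ensemble $\{p_x,\rho_x\}$ with average state $\rho_{A'}=\sum_x p_x\rho_x$, and let $\phi_{AA'}$ be a purification of $\rho_{A'}$. By the Schr\"odinger--HJW (steering) theorem there exists a POVM $\{M_x\}$ on $A$ whose outcome statistics are $\{p_x\}$ and whose post-measurement states on $A'$ are $\{\rho_x\}$. Write $\cE_{A\to X}$ for the measure-and-record instrument implementing it, sending $\phi_{AA'}$ to the classical-quantum state $\sum_x p_x \proj{x}_X\ox\rho_x$.

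Now pick any $\cM\in\cV$ and let $\sigma_B\in\cS(B)$ be a constant-bounding state, so that $\cM(\tau)\le\sigma_B$ for every state $\tau$. Applying data-processing of the quantum relative entropy under $\cE_{A\to X}\ox\id_B$ and reading off the resulting classical-quantum states yields
\begin{align*}
D\big(\cN_{A'\to B}(\phi_{AA'})\,\big\|\,\cM_{A'\to B}(\phi_{AA'})\big)
\ge \sum_x p_x\, D\big(\cN(\rho_x)\,\big\|\,\cM(\rho_x)\big).
\end{align*}
Since $\cM(\rho_x)\le\sigma_B$ and the logarithm is operator monotone on the positive cone, $-\log\cM(\rho_x)\ge -\log\sigma_B$, so each term satisfies $D(\cN(\rho_x)\|\cM(\rho_x))\ge D(\cN(\rho_x)\|\sigma_B)$. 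Combining with the standard identity $\chi(\{p_x,\rho_x\})=\min_{\sigma\in\cS(B)}\sum_x p_x D(\cN(\rho_x)\|\sigma)$ gives
\begin{align*}
D\big(\cN(\phi_{AA'})\,\big\|\,\cM(\phi_{AA'})\big)\ \ge\ \chi(\{p_x,\rho_x\}).
\end{align*}
Minimizing the left-hand side over $\cM\in\cV$ and then maximizing over ensembles (the purification being tied to the average state $\rho_{A'}$, which is itself free to range over $\cS(A')$) produces $\Upsilon(\cN)\ge \chi(\cN)$.

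For the regularized inequality, apply the single-letter bound to $\cN^{\ox n}$, divide by $n$, and take $\limsup_{n\to\infty}$ on both sides. Using $C(\cN)=\lim_{n\to\infty} n^{-1}\chi(\cN^{\ox n})$ from the HSW theorem yields $C(\cN)\le \Upsilon^{\infty}(\cN)$.

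The only non-routine ingredient is the operator-monotonicity step: it is precisely what allows the relaxation from strictly constant channels (for which $\cM(\phi_{AA'})=\rho_A\ox\sigma_B$ and one recovers the mutual-information/Matthews--Wehner bound) to the strictly larger class $\cV$ of constant-bounded subchannels, while still dominating the Holevo capacity. The steering lemma, DPI, and the Donald-type identity for $\chi$ are all standard.
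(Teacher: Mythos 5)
Your proof is correct, and it takes a genuinely different route from the paper. The paper first establishes concavity of $D(\cN(\phi_{A'A})\|\cM(\phi_{A'A}))$ in $\rho_{A'}$, invokes Sion's minimax theorem to swap the maximization and minimization, and then relaxes via data processing (tracing out $A$), dominance, and the divergence-radius characterization $\chi(\cN)=\min_{\sigma_B}\max_{\rho_{A'}} D(\cN(\rho_{A'})\|\sigma_B)$. You bypass Sion entirely: by realizing an arbitrary ensemble $\{p_x,\rho_x\}$ of $\rho_{A'}$ via a steering POVM on the purifying system $A$ and applying DPI under the associated measure-and-record instrument, you obtain the pointwise bound $D(\cN(\phi)\|\cM(\phi))\ge\sum_x p_x D(\cN(\rho_x)\|\cM(\rho_x))$; the dominance step $\cM(\rho_x)\le\sigma_B$ then puts you directly at the Donald identity for the ensemble Holevo information, so that the $\max$-$\min$ in the definition of $\Upsilon(\cN)$ immediately yields $\chi(\cN)$. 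Conceptually, both arguments use the same two nontrivial ingredients (a purification-level data-processing step and the dominance/monotonicity of $D$ in its second argument), but the paper spends its DPI coarsely (a partial trace) and must compensate with minimax, whereas you spend it finely (a steering measurement) and need no minimax at all. Your version is therefore a bit more elementary, avoids proving the concavity lemma, and yields a slightly sharper intermediate statement (a lower bound in terms of each ensemble, not just the divergence radius). One small notational remark: in the displayed identity $\chi(\{p_x,\rho_x\})=\min_\sigma\sum_x p_x D(\cN(\rho_x)\|\sigma)$, the left-hand side should be read as the Holevo quantity of the \emph{output} ensemble $\{p_x,\cN(\rho_x)\}$; the ensuing supremum over input ensembles then correctly gives $\chi(\cN)$. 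The regularization step for $C(\cN)\le\Upsilon^\infty(\cN)$ is routine and matches the paper.
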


\begin{proof}
We first need to prove that the quantity $D(\cN_{A'\to B}(\phi_{A'A})\|\cM_{A'\to B}(\phi_{A'A}))$ is concave in $\rho_{A'}$. For any convex combination $\rho_{A'} = \sum_i p_i \rho_{A'}^i$, suppose $\rho_{A'}^i$ has a purification $\phi_{A'A}^i$. Then $\ket{\psi_{PAA'}} = \sum_i \sqrt{p_i} \ket{i} \ox \ket{\phi_{AA'}^i}$ is a purification of the state $\rho_{A'}$. By the data-processing inequality of the relative entropy under the channel \mbox{$\sum_i \ket{i}\bra{i} \cdot \ket{i}\bra{i}$}, we have
\begin{align}
D(\cN_{A'\to B}(\psi_{PAA'}) \big\|\cM_{A'\to B}(\psi_{PAA'})) \geq D(G_1\|G_2),\notag
\end{align}
\begin{align}
\text{with} \quad \quad G_1 & = \sum_i p_i \ket{i}\bra{i}\ox \cN_{A'\to B}(\phi_{AA'}^i),\\
G_2 & = \sum_i p_i \ket{i}\bra{i}\ox \cM_{A'\to B}(\phi_{AA'}^i).
\end{align}
Then the concavity follows from
\begin{align}
    D(G_1\|G_2) = \sum_i p_i  D(\cN_{A'\to B}(\phi_{AA'}^i)\|\cM_{A'\to B}(\phi_{AA'}^i)).\notag
\end{align}
We have the following chain of inequalities:
  \begin{align}
    & \U(\cN)\notag \\
    &\quad = \max_{\rho_{A'}} \min_{\cM \in \cV}  D(\cN_{A'\to B}(\phi_{A'A})\big\|\cM_{A'\to B}(\phi_{A'A}))\\ 
    &\quad = \min_{\cM \in \cV} \max_{\rho_{A'}} D(\cN_{A'\to B}(\phi_{A'A})\big\|\cM_{A'\to B}(\phi_{A'A}))  \label{u minmax exchange}\\ 
    &\quad \geq \min_{\cM \in \cV}  \max_{\rho_{A'}}  D(\cN_{A'\to B}(\rho_{A'})\big\|\cM_{A'\to B}(\rho_{A'}))\\
    &\quad \geq \min_{\cM \in \cV}  \max_{\rho_{A'}}  D(\cN_{A'\to B}(\rho_{A'})\big\|\sigma_{\cM})\\
    &\quad \geq \min_{\sigma_B}  \max_{\rho_{A'}}  D(\cN_{A'\to B}(\rho_{A'})\big\|\sigma_B)\\
    &\quad = \chi(\cN).
  \end{align}
    The second line follows by Sion's minimax theorem~\cite{Sion1958} since $D(\cN_{A'\to B}(\phi_{A'A})\big\|\cM_{A'\to B}(\phi_{A'A}))$ is convex in $\cM$ and concave in~$\rho_{A'}$. The third line follows by tracing out the system $A$ and the data-processing inequality of the relative entropy. The fourth line follows since for any $\cM \in \cV$ and $\rho_{A'}$, there exists a state $\sigma_{\cM}$ independent of $\rho_{A'}$ such that $\cM_{A'\to B}(\rho_{A'}) \leq \sigma_{\cM}$. Due to the dominance property of the relative entropy, we have the inequality. The fifth line follows since we relax the feasible set of the minimization to a larger set. The last line follows from the characterization of the Holevo capacity as the divergence radius~\cite{Schumacher2001}.

   Finally, according to the HSW theorem, we have
  \begin{align}
    C(\cN) & = \limsup_{n\to \infty} \frac1n \chi(\cN^{\ox n})\\
    & \leq \limsup_{n\to \infty} \frac1n \U(\cN^{\ox n})= \U^{\infty} (\cN),
  \end{align}
  which completes the proof.
\end{proof}

\begin{proposition}
\label{U CE compare}
For any quantum channel $\cN$, we have 
\begin{align}
  \U(\cN) \leq C_E(\cN)\quad \text{and} \quad \U^{\infty}(\cN)\leq C_E(\cN).
\end{align}
\end{proposition}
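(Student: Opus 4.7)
The plan is to bound $\U(\cN)$ from above by restricting the minimization in its definition to a convenient sub-family of $\cV$, namely the constant channels $\cM(\cdot) = \sigma_B$ with $\sigma_B \in \cS(B)$. First I would observe that every such constant channel trivially lies in $\cV$, since for every input state $\rho$ we have $\cM(\rho) = \sigma_B \le \sigma_B$, so $\sigma_B$ itself is a valid witness in the definition of $\cV$.

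Next I would compute the action of a constant channel on the purification. For $\cM_{A'\to B}(\cdot) = \sigma_B$ we have $\cM_{A'\to B}(\phi_{A'A}) = \rho_A \ox \sigma_B$, where $\rho_A = \tr_{A'} \phi_{A'A}$ is the reduced state on $A$. Moreover, since the identity acts on $A$, we also have $\tr_B \cN_{A'\to B}(\phi_{A'A}) = \rho_A$. Thus restricting the inner minimization in $\U(\cN)$ to constant channels yields
\begin{align}
\min_{\cM \in \cV} D\bigl(\cN_{A'\to B}(\phi_{A'A}) \,\big\|\, \cM_{A'\to B}(\phi_{A'A})\bigr)
\;\le\; \min_{\sigma_B} D\bigl(\cN_{A'\to B}(\phi_{A'A}) \,\big\|\, \rho_A \ox \sigma_B\bigr)
\;=\; I(A;B)_{\cN(\phi_{A'A})},
\end{align}
where the last equality is the usual variational formula for the quantum mutual information.

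Maximizing over all input states $\rho_{A'}$ and invoking the Bennett--Shor--Smolin--Thapliyal characterization of the entanglement-assisted classical capacity as the maximum quantum mutual information over input purifications then gives
\begin{align}
\U(\cN) \;\le\; \max_{\rho_{A'}} I(A;B)_{\cN(\phi_{A'A})} \;=\; C_E(\cN).
\end{align}
For the regularized statement I would apply the above bound to $\cN^{\ox n}$, divide by $n$, take $\limsup$, and use additivity of $C_E$ under tensor products:
\begin{align}
\U^{\infty}(\cN) \;=\; \limsup_{n\to\infty} \frac{1}{n}\, \U(\cN^{\ox n})
\;\le\; \limsup_{n\to\infty} \frac{1}{n}\, C_E(\cN^{\ox n}) \;=\; C_E(\cN).
\end{align}

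There is essentially no obstacle in this proof: it is a direct comparison of feasible sets, using the fact that constant channels are the ``trivial'' members of $\cV$ and already reproduce the mutual-information form of $C_E$. The only point requiring attention is the compatibility of purification conventions, i.e.\ ensuring that the marginal of $\phi_{A'A}$ on the reference system $A$ coincides with the marginal of $\cN_{A'\to B}(\phi_{A'A})$ on $A$, which is immediate because $\cN$ acts trivially on $A$.
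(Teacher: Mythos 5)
Your proof is correct and follows essentially the same route as the paper: both proofs restrict the inner minimization over $\cV$ to constant channels (the paper phrases this as the trivial channel with Choi matrix $\1_A \ox \sigma_B$, which is precisely $\cM(\cdot) = \tr(\cdot)\,\sigma_B$), observe that this recovers the mutual-information characterization of $C_E$, and then pass to the regularized version via additivity of $C_E$. The only cosmetic difference is that you spell out the variational formula $\min_{\sigma_B} D(\cN(\phi)\|\rho_A\ox\sigma_B) = I(A;B)$ and the BSST theorem explicitly, whereas the paper leaves them implicit.
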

\begin{proof}
For any state $\sigma_B$ we introduce a trivial channel $\cM$ that always outputs $\sigma_B$ via its Choi-Jamio\l{}kowski matrix $J_\cM = \1_A \ox \sigma_B$. Then $\cM \in \cV$ and we have
\begin{align}
  &\min_{\sigma_B} D(\cN_{A'\to B}(\phi_{AA'})\|\rho_A \ox \sigma_B)\notag\\
   &\quad = \min_{\sigma_B} D\big(\cN_{A'\to B}(\phi_{AA'})\|\rho_A^{1/2} (\1_A\ox \sigma_B)\rho_A^{1/2}\big)\\
   & \quad \geq \min_{\cM \in \cV} D(\cN_{A'\to B}(\phi_{AA'})\|\cM_{A'\to B}(\phi_{AA'})).
\end{align}
Take maximization over all input state $\rho_{A'}$ on both sides, we have $C_E(\cN) \geq \U(\cN)$. Furthermore, since $C_E(\cN)$ is additive, 
we have
\begin{align}
   C_E(\cN) &= \limsup_{n\to \infty} \frac1n C_E(\cN^{\ox n})\\
    &\geq \limsup_{n\to \infty} \frac1n \U(\cN^{\ox n}) = \U^{\infty} (\cN),
\end{align}
which completes the proof.
\end{proof}

\begin{proposition}
\label{Cbeta U compare}
For any quantum channel $\cN$,
we have 
\begin{align}
  \U(\cN)\leq C_\b(\cN)\quad \text{and} \quad \U^{\infty}(\cN)\leq C_\b(\cN).
\end{align}
\end{proposition}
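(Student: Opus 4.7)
The plan is to exhibit, for every channel $\cN$, a single explicit element $\cM^\star$ of $\cV$ whose divergence from $\cN$ already equals $\log\b(J_\cN)$ on every input, thereby bounding $\min_{\cM\in\cV}$ from above uniformly in $\rho_{A'}$. The natural candidate is the rescaled map
\[
\cM^\star := \frac{1}{\b(J_\cN)}\,\cN, \qquad \text{i.e.,} \qquad J_{\cM^\star} = \frac{J_\cN}{\b(J_\cN)}.
\]
This is clearly CP, and the point of the choice is that with this normalization $\cM^\star$ lies in $\cV_\b\subseteq \cV$ (invoking Lemma \ref{sdp subset lemma}).

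The verification $\b(J_{\cM^\star})\le 1$ follows from positive homogeneity of the SDP \eqref{sdp_beta}: if $\{R_{AB},S_B\}$ is feasible for $J_\cN$, then $\{R_{AB}/\b(J_\cN),S_B/\b(J_\cN)\}$ is feasible for $J_{\cM^\star}$, so $\b(J_{\cM^\star})=\b(J_\cN)/\b(J_\cN)=1$. Next, for any $\rho_{A'}$ with purification $\phi_{A'A}$, the state $\omega:=\cN_{A'\to B}(\phi_{A'A})$ is normalized and $\cM^\star_{A'\to B}(\phi_{A'A})=\omega/\b(J_\cN)$, so the scalar identity $D(\omega\|\omega/c)=\log c$ yields
\[
D\big(\cN_{A'\to B}(\phi_{A'A})\big\|\cM^\star_{A'\to B}(\phi_{A'A})\big)=\log\b(J_\cN)=C_\b(\cN).
\]
Taking the minimum over $\cM\in\cV$ on the left is no larger than this, and then maximizing over $\rho_{A'}$ gives $\U(\cN)\le C_\b(\cN)$, which is the first inequality.

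For the regularization, I would apply the single-letter bound just proved to every tensor power, obtaining $\U(\cN^{\ox n})\le C_\b(\cN^{\ox n})$, and then invoke the submultiplicativity of $\b$ under tensor products (equivalently, subadditivity of $C_\b$) established in \cite{Wang2016g}, namely $C_\b(\cN^{\ox n})\le n\,C_\b(\cN)$. Dividing by $n$ and taking $\limsup$ yields
\[
\U^\infty(\cN)=\limsup_{n\to\infty}\frac1n\,\U(\cN^{\ox n})\le \limsup_{n\to\infty}\frac1n\,C_\b(\cN^{\ox n})\le C_\b(\cN).
\]

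The only non-routine point is recognizing that the rescaled map $\cN/\b(J_\cN)$ is a legitimate element of $\cV$; this is the conceptual content of the argument and is handled by the positive homogeneity of the SDP \eqref{sdp_beta} together with Lemma \ref{sdp subset lemma}. The divergence computation and the passage to the regularization are then essentially a one-line scalar identity and a direct appeal to the tensorization property of $\b$ proved in \cite{Wang2016g}, respectively, so I do not anticipate any further obstacle.
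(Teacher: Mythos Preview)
Your proposal is correct and follows essentially the same approach as the paper: both choose the rescaled map $\cM^\star=\cN/\b(J_\cN)\in\cV_\b\subseteq\cV$, evaluate $D(\cN(\phi)\|\cM^\star(\phi))=\log\b(J_\cN)$, and then pass to the regularization using the tensorization of $C_\b$ from \cite{Wang2016g}. The only cosmetic difference is that the paper invokes full additivity of $C_\b$ whereas you invoke (the weaker, sufficient) subadditivity, and you spell out the positive-homogeneity argument for $\b(J_{\cM^\star})\le 1$ that the paper leaves implicit.
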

\begin{proof}
Take $\widetilde \cM = \frac{1}{\b(J_\cN)}\cN,$ then $\widetilde \cM \in \cV_\beta \subseteq \cV$ and 
\begin{align}
 \U(\cN)& = \max_{\rho_{A'}} \min_{\cM \in \cV}  D(\cN_{A'\to B}(\phi_{AA'})\|\cM_{A'\to B}(\phi_{AA'}))\notag\\
& \leq \max_{\rho_{A'}}  D\big(\cN_{A'\to B}(\phi_{AA'})\|\widetilde \cM_{A'\to B}(\phi_{AA'})\big)\\
& = \max_{\rho_{A'}} 
D\Big(\cN_{A'\to B}(\phi_{AA'})\Big\| \,\frac{\cN_{A'\to B}(\phi_{AA'})}{\b(J_\cN)} \Big)\\
& = \log \b(J_\cN)\\
& = C_\b({\cN}).
\end{align}
Furthermore, since $C_\b(\cN)$ is additive~\cite{Wang2016g}, we have
\begin{align}
  \U^\infty(\cN) &= \limsup_{n\to \infty}\frac{1}{n}\U(\cN^{\ox n})\\
   &\leq 
  \limsup_{n\to \infty}\frac{1}{n} C_\b(\cN^{\ox n}) = C_\b(\cN),
\end{align} 
which completes the proof.
\end{proof}

\vspace{0.2cm}
In the remainder we focus on covariant channels which allow
us to simplify the set of input states. 
Let $G$ be a finite group, and for every $g \in G$, let $g \to U_A(g)$ and $g \to V_B(g)$ be
unitary representation acting on the input and output spaces of the channel, respectively. Then a quantum
channel $\cN_{A\to B}$ is \emph{$G$-covariant} if $\forall \rho_A \in \cS(A)$,
\begin{align*}
    \cN_{A\to B}\big(U_A(g) \rho_A U_A^\dagger(g)\big) = V_B(g)\cN_{A\to B}(\rho_A) V_B^\dagger(g).
\end{align*}
A quantum channel is \emph{covariant} if it is covariant with respect to a finite group $G$ for which each $g \in G$ has a unitary representation $U(g)$ such that $\{U(g)\}_{g\in G}$ is a unitary one-design. That is, the map $\frac{1}{|G|} \sum_{g\in G} U(g) (\cdot) U(g)^{\dagger}$ always outputs the maximally mixed state for all input states.

\begin{proposition}
  \label{U Uinfinity covariant}
  For any covariant channel $\cN$, we have 
  \begin{align}
    \U^{\infty}(\cN)\leq \U(\cN).
  \end{align}
  \end{proposition}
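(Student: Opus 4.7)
Since $\U^\infty(\cN) = \limsup_n \frac{1}{n}\U(\cN^{\ox n})$, the claim reduces to subadditivity: $\U(\cN^{\ox n}) \le n\,\U(\cN)$. I would establish this via a covariance-based symmetrization combined with a product trial.

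By Sion's minimax (applicable as in the proof of Proposition~\ref{converse for holevo}, since $D(\cN(\phi)\|\cM(\phi))$ is concave in $\rho_{A'}$ and convex in $\cM$), I may write $\U(\cN) = \min_{\cM\in\cV}\max_{\rho_{A'}} D(\cN(\phi_{A'A})\|\cM(\phi_{A'A}))$. Given any $\cM\in\cV$, the twirled map
\begin{align}
\overline{\cM}(\cdot) \;=\; \frac{1}{|G|}\sum_{g\in G} V(g)^\dagger\,\cM\big(U(g)(\cdot)U(g)^\dagger\big)\,V(g)
\end{align}
remains in $\cV$ (by convexity of $\cV$ and its invariance under composition with unitary channels, cf.~Appendix~\ref{app: MB}) and is $G$-covariant. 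Covariance of $\cN$ and $\overline{\cM}$, combined with unitary invariance of the relative entropy, shows that $\rho_{A'}\mapsto D(\cN(\phi_{A'A})\|\overline{\cM}(\phi_{A'A}))$ is constant along the orbit $\rho_{A'}\mapsto U(g)\rho_{A'}U(g)^\dagger$; concavity in $\rho_{A'}$ and the one-design property $\tfrac{1}{|G|}\sum_g U(g)\rho_{A'}U(g)^\dagger=\pi$ then pin the outer maximum to $\rho_{A'}=\pi$, giving $\U(\cN) = \min_{\cM\in\cV,\,G\text{-covariant}} D(\cN(\Phi)\|\cM(\Phi))$ with $\Phi$ the maximally entangled state.

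The same reduction applied to $\cN^{\ox n}$, which is $G^n$-covariant with respect to the one-design $\{U(g_1)\ox\cdots\ox U(g_n)\}$ on $A'^n$, produces $\U(\cN^{\ox n})=\min D(\cN^{\ox n}(\Phi^{\ox n})\|\cM_n(\Phi^{\ox n}))$ over $G^n$-covariant $\cM_n\in\cV(A'^n\!:\!B^n)$. Choosing the product trial $\cM^{\star\ox n}$ built from a $G$-covariant optimizer $\cM^\star$ of $\U(\cN)$, additivity of the relative entropy under tensor products yields
\begin{align}
D\big(\cN^{\ox n}(\Phi^{\ox n})\,\big\|\,\cM^{\star\ox n}(\Phi^{\ox n})\big) \;=\; n\,D(\cN(\Phi)\|\cM^\star(\Phi)) \;=\; n\,\U(\cN),
\end{align}
which gives $\U(\cN^{\ox n})\le n\,\U(\cN)$ provided the trial is admissible, i.e.\ $\cM^{\star\ox n}\in\cV(A'^n\!:\!B^n)$.

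The main obstacle is precisely this final tensor-closure step. The set $\cV$ is not closed under tensor products of arbitrary elements, but closure does follow whenever the stronger completely-positive-order condition $J_{\cM^\star}\le\1_{A'}\ox\sigma^\star$ holds, i.e.\ whenever $\cN_{\sigma^\star}-\cM^\star$ (with $\cN_{\sigma^\star}(\rho):=\sigma^\star\tr\rho$) is CP. Indeed, in that case the telescoping identity
\begin{align}
\cN_{\sigma^\star}^{\ox n}-\cM^{\star\ox n} \;=\; \sum_{k=1}^{n}\cM^{\star\ox(k-1)}\ox\big(\cN_{\sigma^\star}-\cM^\star\big)\ox\cN_{\sigma^\star}^{\ox(n-k)}
\end{align}
writes the difference as a sum of CP maps, hence CP, and therefore forces $\cM^{\star\ox n}(\rho_n)\le\sigma^{\star\ox n}$ for every state $\rho_n$. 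The technical bulk of the proof is thus to show that in the covariant setting the inner minimum over $G$-covariant $\cM\in\cV$ can be attained (or approached arbitrarily closely) by some $\cM$ satisfying this CP-order condition, which I would pursue by exploiting the block-diagonal structure that $J_{\cM^\star}$ inherits from the commutant of the representation $\overline{U(g)}\ox V(g)$, or by restricting the minimisation to the SDP-tractable subset $\cV_\b\subseteq\cV$ of Lemma~\ref{sdp subset lemma}, for which tensor closure is automatic by multiplicativity of $\b$.
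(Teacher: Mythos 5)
Your route mirrors the paper's: use the covariant twirling argument (the relative-entropy analogue of Lemma~\ref{simplify input}) to pin the input of $\U$ to the maximally mixed state, then establish subadditivity $\U(\cN^{\ox n})\le n\,\U(\cN)$ via the product trial $\cM^{\star\ox n}$. The paper compresses the second step to ``it is clear'', and you are right that this is not obvious: $\cM\in\cV$ only asserts that $\1_A\ox\sigma-J_\cM$ is block-positive (equivalently, $\cN_\sigma-\cM$ is a positive but not necessarily CP map), and block-positivity is not tensor-stable. Flagging this admissibility issue for the trial $\cM^{\star\ox n}$ is a genuine and useful observation that the paper's one-line proof glosses over.

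However, neither of your two proposed repairs closes the gap. The $\cV_\b$ route fails logically: since $\cV_\b\subsetneq\cV$, minimizing over $\cV_\b$ defines a quantity $\U_\b\ge\U$, and the submultiplicativity of $\b$ (which does hold, and does make $\cV_\b$ tensor-closed) only yields $\U^\infty(\cN)\le\U_\b(\cN)$ --- strictly weaker than the claim unless you additionally show $\U=\U_\b$, which you do not. The CP-order route (requiring $J_{\cM^\star}\le\1_{A'}\ox\sigma^\star$, which would indeed make the telescoping decomposition work) would suffice, but the burden is then to show that the covariant minimizer can be taken in this smaller set, or approximated within $o(n)$ cost; you explicitly leave this as future work, and it is not evidently true --- e.g.\ for fully covariant $\cM$ with $J_\cM=a\1+b\widetilde\Phi$ and $\sigma=\pi$, the membership condition gives $a+b\le 1/d$ while CP-order demands the strictly stronger $a+bd\le 1/d$, and the minimizer may sit on the weaker boundary. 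So as written, the proposal identifies the right obstruction but does not prove the proposition; a complete argument would need to either establish tensor closure of $\cV$ directly, or show the optimizer can be perturbed into the CP-order cone at sub-extensive cost.
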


\begin{proof}
     Following the proof steps in Lemma \ref{simplify input} for the quantum relative entropy, we can fix the average input state of $\U(\cN)$ to be the maximally mixed state. Therefore, we find 
   \begin{align}
\label{covariant upsilon}
\U(\cN) = \min_{\cM\in \cV} D(\cN_{A'\to B}(\Phi_{A'A})\big\| \cM_{A'\to B}(\Phi_{A'A})),\end{align}
where $\Phi_{A'A} = \frac{1}{d}\sum_{i,j=0}^{d-1}\ket{ii}\bra{jj}$. Thus it is clear that that $\U$ is subadditive for covariant channels, i.e., $\U(\cN^{\ox n}) \leq n\U(\cN),$ which implies $\U^{\infty}(\cN)\leq \U(\cN).$
\end{proof}

\vspace{0.2cm}
\begin{remark}
In an analogous spirit as in~\cite{Tomamichel2015a} we can also show that the $\U$-information of a channel is a strong converse bound for covariant channels. We present this analysis in Appendix~\ref{app: upsilon}.
\end{remark}
 
We provide a summarized graph of relations among the old bounds and new bounds in Fig.~\ref{relation graph}. 
Since $C_\b$ and $C_E$ are relaxations of the $\U$-information, then the $\U$-information is expected to be generally tighter than $C_\b$ and $C_E$.
Similarly, since the $\U$-information is a relaxation of the Holevo capacity, the inequality between them may be strict in general. 
However, for quantum erasure channels, our $\U$-information is tight and it holds that 
\begin{align}
  \U(\cE_p) = \U^\infty(\cE_p) = C(\cE_p) = \chi(\cE_p) = (1-p)\log d, \notag
\end{align} (see details in Section~\ref{sec: erasure channel}). Combining this property and the meta-converse in Theorem~\ref{meta converse main theorem}, we establish the finite blocklength analysis for classical communication over quantum erasure channels in Theorems~\ref{th: second order erasure} and~\ref{th: moderate erasure}. Another interesting case is the qubit depolarizing channel $\textcolor{black}{\cN_D(\rho) := (1-p)\rho + \frac{p}{3} (X\rho X + Y \rho Y + Z \rho Z)}$, where $X$, $Y$ and $Z$ are Pauli matrices. For this class of channels, we numerically find that the $\U$-information appears to be strictly larger than the Holevo capacity but it is tighter than $C_\beta$ and $C_E$. We expect that the $\U$-information may have further applications in studying the strong converse property of other quantum channels.

\section{Finite blocklength analysis for quantum erasure channel}\label{sec: erasure channel}
 
The quantum erasure channel is denoted by
\begin{align}
  \cE_p(\rho) := (1-p)\rho + p \ket{e}\bra{e},
\end{align}
where $\ket{e}$ is orthogonal to the input Hilbert space. The classical capacity of a quantum erasure channel is given by $
C(\cE_p) = (1-p)\log d,$
where $d$ is the dimension of input space~\cite{Bennett1997}. In~\cite{Wilde2014c}, the strong converse property for the classical capacity of $\cE_p$ is established.

In this section, applying our new meta-converse, we derive the second-order expansion and moderate deviation analysis of quantum erasure channel in Theorem \ref{th: second order erasure} and \ref{th: moderate erasure}, respectively. To our knowledge, this is the first second-order or moderate deviation expansion of classical capacity beyond entanglement-breaking channels. 

We first show that the $\U$-information matches the classical capacity for erasure channels.
\begin{lemma}
  \label{erasure c capacity}
  For any quantum erasure channel $\cE_p$ with input dimension $d$, we have $\U(\cE_p) =  (1-p)\log d$.
\end{lemma}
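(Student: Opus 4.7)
The plan is to prove the equality in two directions.

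For the lower bound $\U(\cE_p)\ge(1-p)\log d$, I would invoke Proposition~\ref{converse for holevo}, which gives $\U(\cE_p)\ge\chi(\cE_p)$. The Holevo capacity of $\cE_p$ is at least $(1-p)\log d$: the ensemble $\{1/d,\ket{i}\!\bra{i}\}_{i=1}^d$ has average output $\cE_p(\1/d)$ with entropy $h(p)+(1-p)\log d$ and mean individual output entropy $h(p)$, whose difference is $(1-p)\log d$.

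For the upper bound, I would first appeal to covariance: $\cE_p$ is covariant under the Heisenberg--Weyl group on $A'$ extended to fix $\ket{e}$ on the output, so the twirling argument underlying Proposition~\ref{U Uinfinity covariant} lets me restrict the average input in $\U(\cE_p)$ to $\rho_{A'}=\1/d$. Hence $\U(\cE_p)=\min_{\cM\in\cV}D(\cE_p(\Phi_{A'A})\|\cM(\Phi_{A'A}))$ with $\Phi_{A'A}$ the maximally entangled state, and it suffices to exhibit a single $\cM\in\cV$ saturating the bound. I propose
\begin{align*}
\cM(X)\,:=\,\tfrac{1-p}{d}\,VXV^{\dagger}+p\,\tr(X)\,\ket{e}\!\bra{e},
\end{align*}
where $V:A'\hookrightarrow B$ is the natural isometric inclusion. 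For any state $\rho$ one checks $\cM(\rho)\le\tfrac{1-p}{d}\1_{A'}+p\ket{e}\!\bra{e}$, a valid density operator, so $\cM\in\cV$. A direct calculation then gives
\begin{align*}
\cE_p(\Phi_{A'A})&=(1-p)\Phi_{AB}+\tfrac{p}{d}\,\1_A\ox\ket{e}\!\bra{e},\\
\cM(\Phi_{A'A})&=\tfrac{1-p}{d}\Phi_{AB}+\tfrac{p}{d}\,\1_A\ox\ket{e}\!\bra{e},
\end{align*}
both block-diagonal on $B=A'\oplus\mathrm{span}\{\ket{e}\}$ with matching weights in each block. The relative entropy therefore splits along this decomposition: the erasure block contributes $0$, while the $A'$ block gives $(1-p)\big[\log(1-p)-\log\tfrac{1-p}{d}\big]=(1-p)\log d$.

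The main obstacle is locating the right test subchannel $\cM$. The naive constant choice $\cM(X)=\tr(X)\sigma$ with $\sigma=\tfrac{1-p}{d}\1_{A'}+p\ket{e}\!\bra{e}$ would lose a factor of two on the $A'$ block, while the simple rescaling $\cM=\cE_p/d$ picks up an extra $p\log d$ on the erasure block. The correct choice preserves the maximally entangled structure $\Phi_{AB}$ on the $A'$ block but shrinks its weight from $1-p$ to $(1-p)/d$, which is exactly the reduction forced by the constant-bounded constraint while keeping the erasure block matched to $\cE_p$.
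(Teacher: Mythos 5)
Your proposal is correct and takes essentially the same route as the paper: the test map you construct has Choi matrix exactly equal to Eq.~\eqref{K}, you use the same covariance reduction to fix the input at $\1/d$, and the block-diagonal relative-entropy computation is identical. The only cosmetic differences are that you verify $\cM\in\cV$ directly from the constant-bounded definition (the paper invokes $\cM\in\cV_\beta\subseteq\cV$ without spelling out the check), and your lower bound goes straight through $\chi(\cE_p)\le\U(\cE_p)$ from Proposition~\ref{converse for holevo}, whereas the paper routes through $C(\cE_p)\le\U^\infty(\cE_p)\le\U(\cE_p)$ using both Propositions~\ref{converse for holevo} and~\ref{U Uinfinity covariant}; your version is marginally more direct.
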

\begin{proof}
  Since quantum erasure channels are covariant, we can restrict the input state to the maximally mixed state, i.e., 
  \begin{align}
  \U(\cE_p) = \min_{\cM \in \cV} D(\cE_p(\Phi_{A'A})\big\| \cM(\Phi_{A'A})),
  \end{align}
  where $\Phi_{A'A} = \frac{1}{d}\sum_{i,j = 0}^{d-1} \ket{ii}\bra{jj}$ is the maximally entangled state.
Denote \begin{align}
  \label{K}
  J_{\cM}= \frac{1-p}{d}\sum_{i,j =0}^{d-1} \ket{ii}\bra{jj} + p\sum_{i=0}^{d-1}\ket{i}\bra{i}\ox \ket{d}\bra{d}
  \end{align}
  as the Choi-Jamio\l{}kowski matrix of the CP map $\cM$. Then we have $\cM\in \cV_\beta \subseteq \cV$ and
  \begin{align}\label{erasure upsilon tmp}
    \U(\cE_p) \leq D(\cE_p(\Phi_{A'A})\big\| \cM(\Phi_{A'A})) = (1-p)\log d.
  \end{align}
  On the other hand, since $\U$ is an upper bound on the classical capacity for covariant channels due to Proposition \ref{converse for holevo} and \ref{U Uinfinity covariant}, we have $(1-p)\log d = C(\cE_p) \leq \U(\cE_p)$. Together with Eq.~\eqref{erasure upsilon tmp}, we have the desired result.
\end{proof}

\subsection{Second-order asymptotics of quantum erasure channel}

\begin{theorem}\label{th: second order erasure}
For any quantum erasure channel $\cE_p$ with parameter $p$ and input dimension $d$, we have
  \begin{align}
  & C^{(1)}(\cE_p^{\ox n}, \ve)  = n (1-p)\log d \notag \\
  & \quad \quad \quad + \sqrt{np(1-p)(\log d)^2}\ \Phi^{-1}(\ve) + O(\log n),
  \end{align}
where $\Phi$ is the cumulative distribution function of a standard normal random variable.
\end{theorem}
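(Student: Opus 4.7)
The plan is to prove matching converse and achievability bounds at second order, with the converse coming from Theorem~\ref{meta converse main theorem} and the achievability from reducing to a classical erasure channel.

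For the converse, I would apply Theorem~\ref{meta converse main theorem} to $\cE_p^{\ox n}$ and use the product ansatz $\cM^{\ox n}$ in the minimization, where $\cM$ is the single-letter constant-bounded subchannel with Choi matrix given in Eq.~\eqref{K}. The membership $\cM\in\cV_\b$ is implicit in the proof of Lemma~\ref{erasure c capacity}, and closure under tensor powers follows from multiplicativity of $\b(\cdot)$, i.e.\ the additivity of $C_\b$ from~\cite{Wang2016g}. Since both $\cE_p^{\ox n}$ and $\cM^{\ox n}$ are covariant under a unitary one-design on each tensor factor, the same symmetry argument used in Lemma~\ref{erasure c capacity} lets us collapse the outer maximization to the maximally entangled input $\Phi_{A'A}^{\ox n}$. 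This gives
\begin{align}
C^{(1)}(\cE_p^{\ox n},\ve) \leq D_H^{\ve}\bigl(\rho^{\ox n}\,\big\|\,\sigma^{\ox n}\bigr),
\end{align}
with $\rho := \cE_p(\Phi)$ and $\sigma := \cM(\Phi)$. I would then invoke the i.i.d.\ second-order expansion
\begin{align}
D_H^{\ve}(\rho^{\ox n}\|\sigma^{\ox n}) = nD(\rho\|\sigma) + \sqrt{n V(\rho\|\sigma)}\,\Phi^{-1}(\ve) + O(\log n)
\end{align}
from~\cite{Tomamichel2015}. The mean $D(\rho\|\sigma) = (1-p)\log d$ is exactly what Lemma~\ref{erasure c capacity} already computes. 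For the variance, I would exploit that both $\rho$ and $\sigma$ block-diagonalize into a ``success'' block (supported on $\ket{\Phi}$), where $\rho=(1-p)\Phi$ and $\sigma=\tfrac{1-p}{d}\Phi$ so $\log\rho-\log\sigma=\log d$, together with an ``erasure'' block ($\proj{e}_B\ox \1_A/d$) where $\rho$ and $\sigma$ coincide and the log-difference vanishes. A direct computation yields $V(\rho\|\sigma) = (1-p)(\log d)^2 - \bigl((1-p)\log d\bigr)^2 = p(1-p)(\log d)^2$, matching the claimed dispersion.

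For achievability, I would encode classical messages into product computational-basis states $\{\ket{\mathbf{i}}\}_{\mathbf{i}\in\{0,\ldots,d-1\}^n}$ and measure each output in the computational basis augmented by the erasure flag. Each channel use then reduces to a classical $d$-ary erasure channel with erasure probability $p$, capacity $(1-p)\log d$, and dispersion $p(1-p)(\log d)^2$. The matching classical second-order expansion of Polyanskiy, Poor and Verd\'u~\cite{Polyanskiy2010} applied to $n$ uses of this classical channel supplies the desired lower bound.

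The main technical obstacle is the second-order expansion of $D_H^{\ve}$ for the chosen $\rho,\sigma$. Fortunately the two operators commute---they share a common eigenbasis consisting of $\ket{\Phi}$ together with the orthogonal flag vectors $\ket{e}\ket{i}$---so the quantum expansion collapses to the classical second-order expansion of binary hypothesis testing between product distributions, which is standard. Secondary checks are that the product ansatz $\cM^{\ox n}$ remains feasible (which follows from multiplicativity of $\b$) and that the covariance-based input reduction carries over from the relative entropy to $D_H^{\ve}$, which is immediate since both $\cE_p$ and $\cM$ inherit the same unitary symmetry of the input space.
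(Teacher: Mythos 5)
Your proposal is correct and follows essentially the same route as the paper: the achievability side reduces $\cE_p$ to a classical $d$-ary erasure channel via computational-basis encoding and decoding, and the converse side fixes the input to $\Phi^{\ox n}$ by covariance and plugs the product ansatz $\cM^{\ox n}$ (with $\cM$ from Eq.~\eqref{K}) into the meta-converse, after which the i.i.d.\ second-order expansion of $D_H^{\ve}$ with $D(\rho\|\sigma)=(1-p)\log d$ and $V(\rho\|\sigma)=p(1-p)(\log d)^2$ gives the result. The only cosmetic deviation is that you apply the $\cV_\b$ version (Theorem~\ref{meta converse main theorem}) and justify feasibility of $\cM^{\ox n}$ via submultiplicativity of $\b$, whereas the paper works directly with $\cV$; both are valid.
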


\begin{proof}
For the direct part, denote 
\begin{align}
  \cF_1(\rho) & := \sum_{i=0}^{d-1} \<i|\rho|i\> \ket{i}\bra{i}, \quad \quad  \text{and}\\
  \cF_2(\rho) & := \sum_{i=0}^{d} \<i|\rho|i\> \ket{i}\bra{i},
\end{align}
which are both  classical channels. Then $\cN_p = \cF_2 \circ \cE_p \circ \cF_1$ is a classical erasure channel. We have
\begin{align}
\label{second order lower erasure}
    & C^{(1)}(\cE_p^{\ox n}, \ve) \geq  C^{(1)}(\cN_p^{\ox n}, \ve) =  n (1-p)\log d \notag\\
    & \quad \quad \quad \ + \sqrt{np(1-p)(\log d)^2}\ \Phi^{-1}(\ve) + O(\log n),
\end{align}
where the equality comes from the result in~\cite{Polyanskiy2010}. 

For the converse part, we have
\begin{align}
    \label{covariant meta converse}
      & C^{(1)}(\cE_p^{\ox n}, \ve)\notag\\
      &\quad\quad \leq \min_{\cM \in \cV} D_H^\ve(\cE_p^{\ox n}(\Phi_{{A'}A}^{\ox n})\big\|\cM_{{A'}^n \to B^n}(\Phi_{{A'}A}^{\ox n})).
\end{align}
    Take $\cM_{{A'}^n \to B^n} = \cM_{A'\to B}^{\ox n}$, where $\cM_{A'\to B}$ is the same CP map as given by Eq.~\eqref{K}, we have
    \begin{align}
         & D_H^\ve(\cE_p^{\ox n}(\Phi_{{A'}A}^{\ox n})\big\| \cM_{A'\to B}^{\ox n}(\Phi_{{A'}A}^{\ox n}))\\
         &  = n D(\cE_p(\Phi_{{A'}A})\big\|\cM(\Phi_{{A'}A})) \\
        & \quad \quad + \sqrt{nV(\cE_p(\Phi_{{A'}A})\big\|\cM(\Phi_{{A'}A}))} \, \Phi^{-1}(\ve) + O(\log n)\notag\\
        & = n (1-p)\log d + \sqrt{np(1-p)(\log d)^2}\, \Phi^{-1}(\ve) + O(\log n).\notag
    \end{align}
    In the second line, we use second-order expansion of quantum hypothesis testing relative entropy and $V(\rho\|\sigma):= \tr \rho (\log \rho - \log \sigma)^2 - D(\rho\|\sigma)^2$ is the quantum information variance~\cite{Tomamichel2013a,Li2014a}. The third line follows by direct calculation. 
    Combining this with~\eqref{covariant meta converse} leads to the desired bound.
\end{proof}

\subsection{Moderate deviation of quantum erasure channel} 

\begin{theorem}
\label{th: moderate erasure}
For any squence $\{a_n\}$ such that $a_n \to 0$ and $\sqrt{n}a_n \to \infty$, let $\ve_n = e^{-na_n^2}$.
For any quantum erasure channel $\cE_p$ with parameter $p$ and input dimension $d$, it holds
  \begin{align}
   & \frac1n C^{(1)}(\cE_p^{\ox n}, \ve_n)
    = (1-p)\log d\notag\\ 
   & \hspace{2.2cm} - \sqrt{2p(1-p)(\log d)^2}\  a_n + o(a_n), \label{moderate erasure 1}\\
  & \frac1n C^{(1)}(\cE_p^{\ox n}, 1- \ve_n) = (1-p)\log d\notag\\
   & \hspace{2.2cm}+ \sqrt{2p(1-p)(\log d)^2}\  a_n + o(a_n) \label{moderate erasure 2}.
   \end{align}
\end{theorem}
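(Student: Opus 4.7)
The plan is to mirror the proof of Theorem~\ref{th: second order erasure}, replacing the second-order expansion of $D_H^\ve$ by its moderate deviation counterpart. First I would set up the achievability half exactly as before: sandwich the quantum erasure channel between the dephasing maps $\cF_1$ and $\cF_2$ so that $\cN_p := \cF_2 \circ \cE_p \circ \cF_1$ is a classical erasure channel whose classical capacity is $(1-p)\log d$ and whose dispersion is $p(1-p)(\log d)^2$. Any coding scheme for $\cN_p^{\ox n}$ is a coding scheme for $\cE_p^{\ox n}$ of no worse success probability, so
$$C^{(1)}(\cE_p^{\ox n},\ve_n) \ge C^{(1)}(\cN_p^{\ox n},\ve_n),$$
and the moderate deviation asymptotics for classical channels~\cite{Chubb2017} supply the matching right-hand side of~\eqref{moderate erasure 1} for $\cN_p$; the same argument delivers the lower bound needed for~\eqref{moderate erasure 2}.

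For the converse I would invoke Theorem~\ref{meta converse main theorem} and choose the competitor $\cM_{A'\to B}^{\ox n}$, where $\cM$ is the constant-bounded map whose Choi matrix is displayed in~\eqref{K}. Since $\cE_p$ is covariant, restricting the input to the maximally entangled state is without loss (as in Lemma~\ref{erasure c capacity}). Set $\rho := \cE_p(\Phi_{A'A})$ and $\sigma := \cM(\Phi_{A'A})$; the computations already recorded in Lemma~\ref{erasure c capacity} and Theorem~\ref{th: second order erasure} give
$$D(\rho\|\sigma)=(1-p)\log d, \qquad V(\rho\|\sigma)=p(1-p)(\log d)^2.$$
The meta-converse thus reduces the problem to estimating $D_H^{\ve_n}(\rho^{\ox n}\|\sigma^{\ox n})$ and $D_H^{1-\ve_n}(\rho^{\ox n}\|\sigma^{\ox n})$.

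At this point I would invoke the moderate deviation expansions of the quantum hypothesis testing relative entropy proved in~\cite{Cheng2017b,Chubb2017}: for $\ve_n = e^{-na_n^2}$ with $a_n \to 0$ and $\sqrt{n}\,a_n \to \infty$,
\begin{align*}
\tfrac{1}{n}D_H^{\ve_n}(\rho^{\ox n}\|\sigma^{\ox n}) &= D(\rho\|\sigma) - \sqrt{2V(\rho\|\sigma)}\,a_n + o(a_n), \\
\tfrac{1}{n}D_H^{1-\ve_n}(\rho^{\ox n}\|\sigma^{\ox n}) &= D(\rho\|\sigma) + \sqrt{2V(\rho\|\sigma)}\,a_n + o(a_n).
\end{align*}
Substituting the two values above matches the converse to the achievability up to $o(a_n)$ in both equalities, which completes the argument.

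The main obstacle is not conceptual but a cleanliness check: I must verify that the hypotheses of the moderate deviation theorems of~\cite{Cheng2017b,Chubb2017} (essentially $V(\rho\|\sigma)>0$ and $\supp\rho\subseteq\supp\sigma$) hold for the particular $\rho$ and $\sigma$ above. Both conditions are immediate from the explicit form of $\sigma$ in~\eqref{K}, whose erasure-flag block is strictly positive and whose $(1-p)/d$-weighted block coincides with $\rho$'s support. A secondary bookkeeping point is that the asymptotic error terms absorb cleanly: since $\sqrt{n}\,a_n\to\infty$, one has $\log n=o(na_n)$, so any $O(\log n)$ residues arising in the achievability or converse expansions are negligible at scale $na_n$, ensuring the moderate deviation rate is matched exactly.
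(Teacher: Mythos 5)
Your proposal is correct and follows essentially the same route as the paper: the converse uses the meta-converse of Theorem~\ref{meta converse main theorem} with the tensor-power competitor $\cM^{\ox n}$ from~\eqref{K} and the moderate-deviation expansion of $D_H^{\ve_n}$ from~\cite{Cheng2017b,Chubb2017}, while achievability reduces to the classical erasure channel $\cN_p=\cF_2\circ\cE_p\circ\cF_1$ exactly as in Theorem~\ref{th: second order erasure}. The extra hygiene checks you flag (positivity of $V$, support inclusion, $\log n = o(na_n)$) are correct and implicitly assumed in the paper's terser presentation.
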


\begin{proof}
We only need to prove Eq.~\eqref{moderate erasure 1}, and Eq.~\eqref{moderate erasure 2} can be proved with the same argument.
For the converse part, we apply the moderate deviation of hypothesis testing in \cite{Chubb2017,Cheng2017b} to our meta-converse in Eq.~\eqref{covariant meta converse}. Specifically,
\begin{align}\hspace{-0.3cm}
C^{(1)}(\cE_p^{\ox n}, \ve) & \leq D_H^\ve(\cE_p^{\ox n}(\Phi_{{A'}A}^{\ox n})\big\|\cM^{\ox n}_{{A'} \to B}(\Phi_{{A'}A}^{\ox n})), 
\end{align}
where $\cM_{A'\to B}$ is the CP map given by Eq.~\eqref{K}.
Thus 
\begin{align}
  & \frac1n C^{(1)}(\cE_p^{\ox n}, \ve_n)\notag\\
   &\ \ \leq \frac1n D_H^\ve(\cE_p^{\ox n}(\Phi_{{A'}A}^{\ox n})\big\|\cM^{\ox n}_{{A'} \to B}(\Phi_{{A'}A}^{\ox n}))\\
  &\ \ =  D(\cE_p(\Phi_{A'A})\|\cM_{A'\to B}(\Phi_{A'A}))\\
  &\ \ \quad\quad\quad - \sqrt{2 V(\cE_p(\Phi_{A'A})\|\cM_{A'\to B}(\Phi_{A'A}))}\ a_n +o(a_n) \notag\\
   &\ \ = (1-p)\log d - \sqrt{2p(1-p)(\log d)^2}\  a_n + o(a_n).
\end{align}
The direct part proceeds analogously to the direct part in Theorem~\ref{th: second order erasure}.
\end{proof}

\section*{Acknowledgments}
We are grateful to Mark M. Wilde for comments on a previous version of this manuscript which inspired us to improve Theorem 5. XW and KF were partly supported by the Australian
Research Council, Grant No. DP120103776 and No. FT120100449. MT acknowledges an Australian Research Council Discovery Early Career Researcher Award, project No. DE160100821.

\bibliographystyle{IEEEtran}
\bibliography{Bib-TIT}

\begin{appendices}
\section{Some properties of $\cV_\b$}\label{app: MB}
 \begin{lemma}
  \label{beta set convexity}
    The set $\cV_\b$ is convex. 
  \end{lemma}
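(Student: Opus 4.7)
The plan is to reduce convexity of $\cV_\beta$ to the observation that $\beta(\cdot)$ is a convex function on the space of Choi matrices, since its value is defined as the minimum of a linear objective over a feasible region that depends affinely on the input operator. Concretely, I would take two CP maps $\cM_1,\cM_2 \in \cV_\beta$ and a coefficient $\lambda \in [0,1]$, and construct an explicit feasible solution of the SDP \eqref{sdp_beta} for the Choi matrix $J_{\cM} = \lambda J_{\cM_1} + (1-\lambda) J_{\cM_2}$ of the mixture $\cM = \lambda \cM_1 + (1-\lambda)\cM_2$, with objective value at most $1$.

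First I would invoke Slater's condition (or argue directly that the SDP is strictly feasible) to justify that optimal primal solutions $(R_{AB}^{(1)},S_B^{(1)})$ and $(R_{AB}^{(2)},S_B^{(2)})$ attaining $\beta(J_{\cM_1})\le 1$ and $\beta(J_{\cM_2})\le 1$ exist. Then I would set
\begin{equation*}
R_{AB} := \lambda R_{AB}^{(1)} + (1-\lambda) R_{AB}^{(2)}, \qquad S_B := \lambda S_B^{(1)} + (1-\lambda) S_B^{(2)}.
\end{equation*}

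Next I would verify the two operator inequality constraints. Using linearity of the partial transpose and of the Choi map, $J_{\cM}^{T_B} = \lambda J_{\cM_1}^{T_B} + (1-\lambda) J_{\cM_2}^{T_B}$. Taking the same convex combination of the inequalities $-R_{AB}^{(i)} \le J_{\cM_i}^{T_B} \le R_{AB}^{(i)}$ yields $-R_{AB} \le J_{\cM}^{T_B} \le R_{AB}$, and the analogous combination gives $-\1_A \otimes S_B \le R_{AB}^{T_B} \le \1_A \otimes S_B$. Hence $(R_{AB},S_B)$ is feasible for $\beta(J_{\cM})$, and the objective evaluates to $\tr S_B = \lambda \tr S_B^{(1)} + (1-\lambda)\tr S_B^{(2)} \le 1$. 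Therefore $\beta(J_{\cM}) \le 1$, so $\cM \in \cV_\beta$.

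There is no real obstacle here: the argument is a one-line verification that the feasible set of \eqref{sdp_beta} depends affinely on $J_{\cM}$ and that the objective is linear, so the sublevel set $\{\beta \le 1\}$ is automatically convex; pulling this back along the linear Choi correspondence $\cM \mapsto J_{\cM}$ preserves convexity. The only point worth stating explicitly in the write-up is the linearity $J_{\lambda \cM_1+(1-\lambda)\cM_2} = \lambda J_{\cM_1}+(1-\lambda)J_{\cM_2}$, so that the constructed $(R_{AB},S_B)$ is actually a certificate for the Choi matrix of the mixture rather than for some unrelated operator.
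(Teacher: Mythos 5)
Your proof is correct and takes essentially the same route as the paper: form the convex combination of optimal (or near-optimal) feasible pairs $(R_{AB}^{(i)},S_B^{(i)})$, observe that linearity of the Choi map and of the constraints makes this a feasible point for $\beta(J_{\cM})$, and bound the objective by $1$. The only minor quibble is that Slater's condition guarantees strong duality rather than primal attainment; but this is immaterial since one may always work with $\epsilon$-optimal feasible points and pass to the limit, just as the paper implicitly does.
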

  \begin{proof}
    Due to the Choi-Jamio\l{}kowski isomorphism, we only need to prove that the set $ \{K \geq 0\ |\ \b(K) \leq 1\}$ is convex. That is,
    for any $K_1, K_1 \in \{K \geq 0 \ |\ \b(K) \leq 1\}$ we prove that for any $p \in (0,1)$, \begin{align}
    K = p K_1 + (1-p) K_2 \in \{K \geq 0 \ |\ \b(K) \leq 1\}.
    \end{align}
    It is clear that $K \geq 0$. Suppose optimal solutions of $\b(K_1)$ and $\b(K_2)$ are $\{R_1,S_1\}$ and $\{R_2,S_2\}$, respectively. Then we can verify that $\{pR_1 + (1-p)R_2,\, pS_1 +(1-p)S_2\}$ is a feasible solution of $\b(K)$. Thus $\b(K) \leq \tr pS_1 +(1-p)S_2 = p\tr S_1 +(1-p)\tr S_2 \leq 1$.
  \end{proof}

\begin{lemma}
   For any local unitary  $U_A\ox V_B$ and $K \geq 0$, it holds
    $\beta\big((U_A\ox V_B) K \big(U_A^\dagger \ox V_B^\dagger\big)\big) = \beta(K)$.
  \end{lemma}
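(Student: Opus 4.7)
The plan is to show that the semidefinite program defining $\beta(K)$ admits a bijection between its feasible sets for $K$ and for $K' := (U_A\otimes V_B) K (U_A^\dagger\otimes V_B^\dagger)$ that preserves the objective $\tr S_B$. This immediately yields $\beta(K') \le \beta(K)$; the reverse inequality then follows by applying the same construction to $K'$ with $(U_A^\dagger, V_B^\dagger)$.

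The workhorse is the transformation rule for the partial transpose under local unitaries,
\begin{align}
\bigl((U_A\otimes V_B)\,M\,(U_A^\dagger\otimes V_B^\dagger)\bigr)^{T_B} = (U_A\otimes V_B^*)\,M^{T_B}\,(U_A^\dagger\otimes V_B^T),
\end{align}
which I would verify by expanding on an elementary tensor $P_A \otimes Q_B$ and using $(V Q V^\dagger)^T = V^* Q^T V^T$. The key observation to carry along is that $V_B^*$ is itself unitary with inverse $V_B^T$, and that $V_B^{**} = V_B$, $(V_B^*)^T = V_B^\dagger$; these identities will let me apply the rule twice without introducing anything new.

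Given a feasible pair $(R_{AB}, S_B)$ for $\beta(K)$, I would define
\begin{align}
R'_{AB} := (U_A\otimes V_B^*)\,R_{AB}\,(U_A^\dagger\otimes V_B^T), \qquad S'_B := V_B S_B V_B^\dagger,
\end{align}
and verify feasibility of $(R'_{AB}, S'_B)$ for $\beta(K')$. Conjugating the first SDP constraint $-R_{AB} \le K^{T_B} \le R_{AB}$ by the unitary $U_A \otimes V_B^*$ and using the transformation rule on $K$ yields $-R'_{AB} \le K'^{T_B} \le R'_{AB}$. Applying the rule a second time to $R'_{AB}$ gives $R'^{T_B}_{AB} = (U_A \otimes V_B)\,R_{AB}^{T_B}\,(U_A^\dagger \otimes V_B^\dagger)$, so conjugating the second constraint $-\1_A \otimes S_B \le R_{AB}^{T_B} \le \1_A \otimes S_B$ by $U_A \otimes V_B$ turns it into $-\1_A \otimes S'_B \le R'^{T_B}_{AB} \le \1_A \otimes S'_B$. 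Cyclicity of the trace gives $\tr S'_B = \tr S_B$, completing feasibility with the same objective value.

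The only substantive obstacle is bookkeeping: since the partial transpose does not commute with conjugation by $V_B$, one has to be careful that the unitary picked up on the $B$ side is $V_B^*$ (not $V_B$) in the first step and flips to $V_B$ after the second application of the rule. Everything else is routine substitution, and the equality $\beta(K') = \beta(K)$ follows.
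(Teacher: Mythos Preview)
Your proposal is correct and follows essentially the same approach as the paper: the paper constructs the candidate feasible pair $\bigl((U_A\otimes \overline{V}_B)\,R_{AB}\,(U_A^\dagger\otimes V_B^T),\ V_B S_B V_B^\dagger\bigr)$, which is exactly your $(R'_{AB},S'_B)$ written with $\overline{V}_B$ in place of $V_B^*$, and then obtains the reverse inequality by the same symmetry argument. The only difference is that you spell out the partial-transpose transformation rule and its second application explicitly, whereas the paper just asserts feasibility as ``easy to verify.''
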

  \begin{proof}
  Suppose the optimal solution of $\beta(K)$ is taken at $\{R_{AB}, S_B\}$. Then it is easy to verify that $\{U_A\ox \overline V_B R_{AB} U_A^\dagger \ox V_B^T, V_B S_B V_B^\dagger\}$ is a feasible solution of $\beta\big(U_A\ox V_B K U_A^\dagger \ox V_B^\dagger\big)$. Thus we have
  \begin{align}
  \beta\big(U_A\ox V_B K U_A^\dagger \ox V_B^\dagger\big) \leq \tr V_B S_B V_B^\dagger  = \tr S_B = \beta(K).\notag
  \end{align}
Furthermore, we have $\beta(K) = \beta\big(\big(U_A^\dagger \ox V_B^\dagger\big) \big(U_A\ox V_B K U_A^\dagger \ox V_B^\dagger\big)(U_A\ox V_B)\big) \leq \beta\big( U_A\ox V_B K U_A^\dagger \ox V_B^\dagger\big),$ which completes the proof.
  \end{proof}

\begin{corollary}
  \label{beta set local unitary}
    For any unitary channel $\cU_{A'\to A'}$ and $\cV_{B\to B}$, if $\cM_{A'\to B} \in \cV_\b$, then \begin{align}
    \cV_{B\to B} \circ \cM_{A'\to B} \circ \cU_{A'\to A'} \in \cV_\b.
    \end{align}
  \end{corollary}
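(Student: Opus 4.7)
The plan is to reduce the corollary to the immediately preceding lemma, which states that $\beta$ is invariant under local unitary conjugations of its argument. To do this, I would first express the Choi--Jamio\l{}kowski matrix of the composite CP map $\cV_{B \to B} \circ \cM_{A' \to B} \circ \cU_{A' \to A'}$ in terms of $J_\cM$ and the defining unitaries $U$ (of $\cU$) and $V$ (of $\cV$).

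Writing $\cU(\rho) = U\rho U^\dagger$ and $\cV(\rho) = V\rho V^\dagger$, the standard manipulation
\begin{align}
J_{\cV \circ \cM \circ \cU} &= \sum_{ij} |i\rangle\langle j|_A \otimes \cV\bigl(\cM(U|i\rangle\langle j|U^\dagger)\bigr)_B
\end{align}
combined with $U|i\rangle\langle j|U^\dagger = \sum_{kl} U_{ki}\overline{U}_{lj}\,|k\rangle\langle l|$ and regrouping gives
\begin{align}
J_{\cV \circ \cM \circ \cU} = (U^T \otimes V)\, J_\cM \,(\overline{U} \otimes V^\dagger).
\end{align}
Since $\overline{U} = (U^T)^\dagger$ and $U^T$ is unitary, this exhibits $J_{\cV \circ \cM \circ \cU}$ as a local unitary conjugation of $J_\cM$ by $U^T \otimes V$.

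Next I would invoke the preceding lemma with $K = J_\cM$ and the local unitary $U^T \otimes V$ to conclude that
\begin{align}
\beta\bigl(J_{\cV \circ \cM \circ \cU}\bigr) = \beta(J_\cM) \leq 1,
\end{align}
where the inequality uses $\cM \in \cV_\beta$. By the definition of $\cV_\beta$, this means $\cV_{B\to B} \circ \cM_{A'\to B} \circ \cU_{A'\to A'} \in \cV_\beta$, completing the proof.

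There is no real obstacle here; the only thing to be careful about is correctly tracking the transpose/conjugate placement in the Choi-matrix identity so that the conjugating operator is manifestly a local unitary, so that the preceding lemma applies verbatim.
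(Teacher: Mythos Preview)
Your proposal is correct and follows essentially the same approach as the paper: both compute the Choi matrix of the composite map as $(U^T\otimes V)\,J_\cM\,(\overline U\otimes V^\dagger)$ and then invoke the preceding lemma on local-unitary invariance of $\beta$. The paper reaches this identity via the transpose trick $U_{A'}\widetilde\Phi_{A'A}=U_A^T\widetilde\Phi_{A'A}$ rather than your index expansion, and it also notes explicitly that $K_{AB}\ge 0$ (needed since $\cV_\beta$ is defined inside $\text{CP}(A\!:\!B)$), but this is immediate from the composition of CP maps.
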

  \begin{proof}
    Denote $J_{\cM} = \cM_{A'\to B}\big(\widetilde \Phi_{A'A}\big)$, where $\widetilde \Phi_{A'A}$ denotes the unnormalized maximally entangled state. Let $\cU_{A'\to A'}(\cdot) = U_{A'} \cdot U_{A'}^\dagger$ and $\cV_{B\to B}(\cdot) = V_{B} \cdot V_B^\dagger$.  Since $\cM_{A'\to B} \in \cV_\b$, we have $J_{\cM} \geq 0$ and $\b(J_{\cM}) \leq 1$. Then,
    \begin{align}
    K_{AB} & = \cV_{B\to B} \circ \cM_{A'\to B} \circ \cU_{A'\to A'} \big(\widetilde\Phi_{A'A}\big)\\
    & = \cV_{B\to B} \circ \cM_{A'\to B} \big(U_{A'} \widetilde\Phi_{A'A} U_{A'}^\dagger\big)\\
    & =\cV_{B\to B} \circ \cM_{A'\to B} \big(U_A^T \widetilde\Phi_{A'A} \overline U_{A}\big)\\
    & = \cV_{B\to B} \Big(U_A^T \cM_{A'\to B} \big( \widetilde\Phi_{A'A}\big) \overline U_{A}\Big)\\
    & =  U_A^T \ox V_B J_{\cM} \overline U_{A} \ox V_B^\dagger.
    \end{align}
    So $K_{AB} \geq 0$ and $\b(K_{AB}) = \beta(J_{\cM}) \leq 1$. Thus $\cV_{B\to B} \circ \cM_{A'\to B} \circ \cU_{A'\to A'} \in \cV_\b$.
  \end{proof}

\section{Proof of Lemma~\ref{simplify input}}
\label{app: covariant}

Let $G$ be a finite group, and for every $g \in G$, let $g \to U_A(g)$ and $g \to V_B(g)$ be
unitary representation acting on the input and output spaces of the channel, respectively. Then a quantum
channel $\cN_{A\to B}$ is $G$-covariant if $\cN_{A\to B}\big(U_A(g) \rho_A U_A^\dagger(g)\big) = V_B(g)
\cN_{A\to B}(\rho_A) V_B^\dagger(g)$ for all $\rho_A \in \cS(A)$. We also introduce the average state $\widehat \rho_{A} = \frac{1}{|G|} \sum_g U_{A}(g) \rho_{A} U_{A}^\dagger(g)$.

For the convenience of presenting the strong converse results in Appendix~\ref{app: upsilon}, we need to introduce the sandwiched R\'{e}nyi relative entropy. For any $\rho \in \cS$, $\sigma \geq 0$ and $\a \in (1,\infty)$,
the sandwiched R\'{e}nyi relative entropy is defined as~\cite{Muller-Lennert2013,Wilde2014a}, 
\begin{align}
 \widetilde D_\a(\rho\|\sigma) :=
   \frac{1}{\a-1}\log \tr ((\sigma^{\frac{1-\a}{2\a}}\rho\sigma^{\frac{1-\a}{2\a}})^\a),
\end{align}
if $\supp\,\rho \subseteq \supp\,\sigma$ and it is equal to $+\infty$ otherwise. We further introduce the R\'{e}nyi version of $\U$-information:
\begin{align}
\widetilde \U_\a(\cN,\rho_{A'}) := \min_{\cM \in \cV} \widetilde D_\a (\cN_{A'\to B}(\phi_{A'A})\| \cM_{A'\to B}(\phi_{A'A})),\notag  
\end{align}
where $\phi_{AA'}$ is a purification of $\rho_{A'}$ as usual.
The following is a direct adaptation of Proposition 2 in~\cite{Tomamichel2015a}.
\begin{lemma}
\label{simplify input}
    Let $\cN_{A'\to B}$ be G-covariant with the average state $\widehat \rho_{A'}$. Then, $\widetilde \U_\a(\cN,\rho_{A'}) \leq \widetilde \U_\a(\cN,\widehat \rho_{A'})$.
\end{lemma}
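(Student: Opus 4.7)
The plan is to upper-bound the left-hand side by exhibiting a single feasible $\cM\in\cV$ in the minimization defining $\widetilde\U_\a(\cN,\rho_{A'})$ whose value is at most $\widetilde\U_\a(\cN,\widehat\rho_{A'})$, via one application of the data-processing inequality for $\widetilde D_\a$ (which is valid throughout $\a\in(1,\infty)$).

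First, I would build a canonical purification of $\widehat\rho_{A'}$ out of any purification $\phi_{A'A}$ of $\rho_{A'}$. Introduce an auxiliary group register $G$ with orthonormal basis $\{\ket{g}\}_{g\in G}$ and set
\begin{equation}
\ket{\widehat\phi}_{A'AG} := \frac{1}{\sqrt{|G|}}\sum_{g\in G}\bigl(U_{A'}(g)\ox\1_A\bigr)\ket{\phi}_{A'A}\ket{g}_G.
\end{equation}
A short check shows $\tr_{AG}\proj{\widehat\phi} = \widehat\rho_{A'}$, so this is a valid purification on $A'\ox(A\ox G)$. Since $\widetilde D_\a$ is invariant under isometries on the purifying system, $\widetilde\U_\a(\cN,\widehat\rho_{A'})$ is attained at some optimizer $\cM^*\in\cV$ using $\widehat\phi$.

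Next, I would introduce the CPTP map $\cT_{BAG\to BA}$ that (i) dephases $G$ in the basis $\{\ket{g}\}$, (ii) conditionally on the classical outcome $g$ applies $V_B^\dagger(g)$ on $B$, and (iii) traces out $G$. A direct computation using only the diagonal $G$-covariance identity $\cN\bigl(U_{A'}(g)X U_{A'}^\dagger(g)\bigr)=V_B(g)\cN(X)V_B^\dagger(g)$ (the off-diagonal $\ket{g}\bra{g'}$ cross-terms in $\widehat\phi$ are killed by the dephasing in step (i)) yields
\begin{align}
\cT\bigl(\cN(\widehat\phi_{A'AG})\bigr) &= \cN(\phi_{A'A}),\\
\cT\bigl(\cM^*(\widehat\phi_{A'AG})\bigr) &= \widehat\cM^*(\phi_{A'A}),
\end{align}
where the symmetrized CP map is
\begin{equation}
\widehat\cM^*(\cdot) := \frac{1}{|G|}\sum_{g\in G} V_B^\dagger(g)\,\cM^*\!\bigl(U_{A'}(g)(\cdot)U_{A'}^\dagger(g)\bigr)\,V_B(g).
\end{equation}
The membership $\widehat\cM^*\in\cV$ follows because if $\cM^*(\rho)\le\sigma^*$ for every state $\rho$, then each summand is bounded above by the state $V_B^\dagger(g)\sigma^* V_B(g)$, and the convex combination of these bounds is again a state.

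With this in place, data processing for $\widetilde D_\a$ under $\cT$ closes the argument:
\begin{align}
\widetilde\U_\a(\cN,\widehat\rho_{A'}) &= \widetilde D_\a\bigl(\cN(\widehat\phi)\,\big\|\,\cM^*(\widehat\phi)\bigr)\\
&\ge \widetilde D_\a\bigl(\cT(\cN(\widehat\phi))\,\big\|\,\cT(\cM^*(\widehat\phi))\bigr)\\
&= \widetilde D_\a\bigl(\cN(\phi_{A'A})\,\big\|\,\widehat\cM^*(\phi_{A'A})\bigr)\\
&\ge \widetilde\U_\a(\cN,\rho_{A'}),
\end{align}
where the last inequality uses $\widehat\cM^*\in\cV$ and the definition of $\widetilde\U_\a(\cN,\rho_{A'})$ as a minimum over $\cV$. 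The main obstacle, and essentially the only substantive calculation, is the covariance bookkeeping that identifies $\cT(\cN(\widehat\phi))$ with $\cN(\phi)$ and $\cT(\cM^*(\widehat\phi))$ with $\widehat\cM^*(\phi)$; the rest is formal.
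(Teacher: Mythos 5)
Your proof is correct and follows essentially the same route as the paper's. Both proofs construct the group-averaged purification $\ket{\widehat\phi}_{A'AG}=\tfrac{1}{\sqrt{|G|}}\sum_g U_{A'}(g)\ket{\phi}_{A'A}\ket{g}_G$ and then invoke data processing for $\widetilde D_\a$ to relate the divergences at $\widehat\phi$ to those at $\phi$; the only presentational difference is that you bundle the dephasing on $G$, the $g$-controlled $V_B^\dagger(g)$, and the trace over $G$ into a single CPTP map $\cT$ (and apply data processing once), whereas the paper applies these as three successive monotonicity/invariance steps. Your intermediate identities $\cT(\cN(\widehat\phi))=\cN(\phi)$ and $\cT(\cM^*(\widehat\phi))=\widehat\cM^*(\phi)$ and the check $\widehat\cM^*\in\cV$ are exactly the covariance bookkeeping the paper's chain of inequalities encodes, so the two arguments coincide in substance.
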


\begin{proof}
  Consider the pure quantum state 
  \begin{align}
    \ket{\psi}_{PAA'} = \sum_g \frac{1}{\sqrt{|G|}} \ket{g}\ox (\1_A\ox U_{A'}(g))\ket{\phi_{AA'}^{\rho}}
  \end{align}
  which purifies $\widehat \rho_{A'}$.
  Then for any fixed CP map $\cM_{A'\to B} \in \cV$, we have the following chain of inequalities in\mbox{~\eqref{long equality start}-\eqref{long equality end}}.
  \begin{figure*}[t!]
  \begin{align}
 &  \widetilde D_\a(\cN_{A'\to B}(\psi_{PAA'})\big\| \cM_{A'\to B}(\psi_{PAA'}))\notag\\
  & \hspace{2cm}\geq \widetilde D_\a \Big(\sum_g \frac{1}{|G|} \ketbra{g}{g}_P \ox \cN_{A'\to B}\circ \cU_{A'}(g)(\phi_{A'A})\Big\| \sum_g \frac{1}{|G|} \ketbra{g}{g}_P \ox \cM_{A'\to B}\circ \cU_{A'}(g)(\phi_{A'A})\Big)\label{long equality start}\\
  & \hspace{2cm} =  \widetilde D_\a\Big(\sum_g \frac{1}{|G|} \ketbra{g}{g}_P \ox \cV_{B}(g)\circ \cN_{A'\to B}(\phi_{A'A})\Big\| \sum_g \frac{1}{|G|} \ketbra{g}{g}_P \ox \cM_{A'\to B}\circ \cU_{A'}(g)(\phi_{A'A})\Big)\\ 
  & \hspace{2cm} =  \widetilde D_\a\Big(\sum_g \frac{1}{|G|} \ketbra{g}{g}_P \ox \cN_{A'\to B}(\phi_{A'A})\Big\| \sum_g \frac{1}{|G|} \ketbra{g}{g}_P \ox \cV_{B}^\dagger(g)\circ \cM_{A'\to B}\circ \cU_{A'}(g)(\phi_{A'A})\Big)\\
  & \hspace{2cm} \geq  \widetilde D_\a\Big(\cN_{A'\to B}(\phi_{A'A})\Big\| \sum_g \frac{1}{|G|} \cV_{B}^\dagger(g)\circ \cM_{A'\to B}\circ \cU_{A'}(g)(\phi_{A'A})\Big)\\
   & \hspace{2cm} \geq \min_{\cM \in \cV} \widetilde D_\a(\cN_{A'\to B}(\phi_{A'A})\big\| \cM_{A'\to B}(\phi_{A'A})).\label{long equality end}
 \end{align}
 \end{figure*}
The second line follows from monotonicity of the sandwiched R\'{e}nyi relative entropy under the channel $\sum_g \ketbra{g}{g} \cdot \ketbra{g}{g}$. The third line follows from the $G$-invariance of the channel $\cN_{A'\to B}$. The fourth line follows from unitary invariance of the sandwiched R\'{e}nyi relative entropy under $\sum_g \ketbra{g}{g} \ox V_{B}^\dagger(g)$. The fifth line follows from monotonicity of the sandwiched R\'{e}nyi relative entropy under the partial trace over $P$. The last line follows from the fact that 
   $\sum_g \frac{1}{|G|} \cV_{B}^\dagger(g)\circ \cM_{A'\to B}\circ \cU_{A'}(g)$ is still an element in $\cV$. 
   
   Finally, we minimize over all maps $\cM \in \cV$. 
   The conclusion then follows because all purifications are related by an isometry acting on the purifying system and the quantity $\widetilde \U_\a(\cN,\rho_{A'})$ is invariant under isometries acting on the purifying system.
\end{proof}

\vspace{0.2cm}
 Furthermore, we should note that in the proof we only use the monotonicity of the sandwiched R\'{e}nyi relative entropy. The result can thus be trivially generalized to other divergences and distance measures, including the hypothesis testing divergence and the quantum relative entropy.

\section{Strong converse for $\U$-information}\label{app: upsilon}
In this section, we are trying to establish the strong converse of $\U$-information and obtain some partial results. Specifically, we show that $\U$ is a strong converse for covariant channels.

\begin{proposition}\label{p succ converse}
For  any quantum channel $\cN_{A'\to B}$ and unassisted code with achievable $(r,n,\ve)$, it holds
\begin{align}
\label{strong converse}
\ve \geq 1 -2^{-n\left(  \frac{\alpha-1}{\alpha}\right)
\left(  r-\frac{1}{n}\widetilde\U_{\alpha}\left(  \mathcal{N}^{\otimes n}\right)  \right)},
\end{align}
where $\widetilde \U_\a(\cN) :=  \max_{\rho_{A'}} \widetilde \U_\a(\cN,\rho_{A'})$.
\end{proposition}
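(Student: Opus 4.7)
The plan is to adapt the meta-converse argument from Theorem~\ref{oneshot metaconverse} (applied to $\cN^{\otimes n}$), combined with a Hölder-type inequality that upgrades the hypothesis-testing step to a sandwiched Rényi bound. This is the standard strategy to pass from one-shot to strong-converse bounds.

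First I would set up the test operator. An unassisted code for $\cN^{\otimes n}$ with $m$ messages and average input $\rho_{A'^n}$ yields input states $\{\rho_k\}$ and a POVM $\{M_k\}$. Exactly as in Theorem~\ref{oneshot metaconverse}, define
\begin{align}
E := (\rho_{A^n}^T)^{-1/2}\bigl(\tfrac{1}{m}\sum_k \rho_k^T\otimes M_k\bigr)(\rho_{A^n}^T)^{-1/2},
\end{align}
so that $0 \le E \le \1$, and for $\phi_{A^nA'^n}$ a purification of $\rho_{A'^n}$,
\begin{align}
\tr\bigl(\cN^{\otimes n}_{A'^n\to B^n}(\phi)\,E\bigr) = p_{\mathrm{succ}} \ge 1-\varepsilon,
\qquad
\tr\bigl(\cM_{A'^n\to B^n}(\phi)\,E\bigr) \le \tfrac{1}{m}
\end{align}
for every $\cM \in \cV$, by the same calculation using that outputs of any $\cM\in\cV$ are dominated by a state $\sigma_{B^n}$.

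Next I would invoke the standard Hölder-type inequality relating the sandwiched Rényi divergence to one-shot testing: for $\alpha>1$, any $\rho\in\cS$, $\sigma\ge 0$, and any $0\le E\le\1$,
\begin{align}
\tr(E\rho) \;\le\; (\tr E\sigma)^{\frac{\alpha-1}{\alpha}}\; 2^{\frac{\alpha-1}{\alpha}\,\widetilde D_\alpha(\rho\|\sigma)}.
\end{align}
(This follows from Hölder's inequality applied to the factorization $\rho = \sigma^{\frac{\alpha-1}{2\alpha}}\tilde\rho\,\sigma^{\frac{\alpha-1}{2\alpha}}$ with conjugate exponents $\alpha$ and $\alpha/(\alpha-1)$; it is well-known in the strong-converse literature.) Applying this with $\rho = \cN^{\otimes n}(\phi)$ and $\sigma = \cM(\phi)$ and using the two bounds on $E$ from the previous paragraph gives
\begin{align}
1-\varepsilon \;\le\; p_{\mathrm{succ}} \;\le\; \Bigl(\tfrac{1}{m}\Bigr)^{\frac{\alpha-1}{\alpha}} 2^{\frac{\alpha-1}{\alpha}\,\widetilde D_\alpha(\cN^{\otimes n}(\phi)\,\|\,\cM(\phi))}.
\end{align}

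Finally, since $\cM \in \cV$ was arbitrary, I would minimize the right-hand side over $\cM$ to obtain $\widetilde\U_\alpha(\cN^{\otimes n},\rho_{A'^n})$, and then use $\widetilde\U_\alpha(\cN^{\otimes n},\rho_{A'^n}) \le \widetilde\U_\alpha(\cN^{\otimes n})$ by definition. Using $\log m \ge nr$ and rearranging gives
\begin{align}
1-\varepsilon \;\le\; 2^{-n\frac{\alpha-1}{\alpha}\bigl(r - \frac{1}{n}\widetilde\U_\alpha(\cN^{\otimes n})\bigr)},
\end{align}
which is equivalent to the claimed bound.

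The main obstacle is simply invoking the Hölder-type inequality correctly; the rest is a mechanical reprise of the meta-converse setup. A minor subtlety is that the minimization over $\cM \in \cV$ must commute past the fixed test operator $E$, which it does because the inequality holds pointwise in $\sigma$ before we take the infimum, so no minimax exchange is required here.
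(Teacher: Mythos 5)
Your proof is correct and follows essentially the same route as the paper's. The test operator $E$ you construct from the code is exactly the one used in the proof of Theorem~\ref{oneshot metaconverse}, and your H\"older-type inequality $\tr(E\rho) \le (\tr E\sigma)^{(\alpha-1)/\alpha}\, 2^{\frac{\alpha-1}{\alpha}\widetilde D_\alpha(\rho\|\sigma)}$ is logically the same step the paper takes: apply monotonicity of $\widetilde D_\alpha$ under the binary measurement $\{E,\1-E\}$, then drop the nonnegative second term of the resulting classical binary R\'enyi divergence $\delta_\alpha$. (Your presentation is arguably a bit cleaner, since it avoids the $\delta_\alpha$ bookkeeping with swapped arguments, but the underlying inequality and the chain of estimates $1-\ve \le \tr E\cN^{\otimes n}(\phi)$, $\tr E\cM(\phi) \le 1/m$, $\log m \ge nr$ are identical.) Your closing remark that no minimax exchange is needed because the bound holds pointwise in $\cM\in\cV$ before minimizing is also the correct resolution of that subtlety.
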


\begin{proof}
Suppose $(r,n,\ve)$ is achieved by the average input state $\rho_{A'^n}$.
From the proof of Theorem~\ref{meta converse main theorem}, we have the inequality that
$C^{(1)}(\cN^{\ox n},\rho_{A^{\prime n}}, \ve)\leq  D_H^{\e}\left(  \cN_{{A'}\rightarrow
B}^{\ox n}(\phi_{A^{\prime n}A^n})\big\|\cM_{{A'}^{n}\rightarrow
B^{n}}(\phi_{A^{\prime n}A^n})\right)$.
Suppose $\{F_{A^{n}B^n}, \1-F_{A^{n}B^n}\}$ is the optimal test of $D_H^{\e}\big(\cN_{{A'}\rightarrow B}^{\ox n}(\phi_{A^{\prime n}A^n})\|\cM_{{A'}^{n}\rightarrow B^{n}}(\phi_{A^{\prime n}A^n})\big)$.
We obtain
\begin{align}
nr \le -\log  f_2 \quad \text{and} \quad 1-\e &\le f_1,\label{M upper}
\end{align}
\begin{align}
\text{with} \quad 
f_1 & = \tr F_{A^{ n}B^n} \cN_{{A'}\rightarrow B}^{\ox n}(\phi_{A^{\prime n}A^n}),\\
f_2 & = \tr F_{A^{ n}B^n} \mathcal{M}_{{A'}^{n}\rightarrow B^{n}}(\phi_{A^{\prime n}A^n}).
\end{align}
Due to the monotonicity of the sandwiched R\'{e}nyi relative entropy under the test $\{F_{A^{n}B^n} , \1-F_{A^{n}B^n} \}$, we have
\begin{align}
\widetilde D_\a\left(\cN_{{A'}\rightarrow
B}^{\ox n}(\phi_{A^{ n}A^n})\big\|\mathcal{M}_{{A'}^{n}\rightarrow B^{n}}(\phi_{A^{\prime n}A^n})\right) \geq  \delta_\a (f_1 \| f_2),\notag
\end{align}
where $ \delta_\a (p\|q) := \frac{1}{\alpha-1} \log \big( p^{\alpha}q^{1-\alpha} + (1-p)^{\alpha}(1-q)^{1-\alpha} \big)$.
Using Eqs.~\eqref{M upper}, we thus find
\begin{align}
 & \min_{\cM \in \cV} \widetilde D_\a \left(\cN_{{A'}\rightarrow
B}^{\ox n}(\phi_{A^{\prime n}A^n})\|\mathcal{M}_{{A'}^{n}\rightarrow B^{n}}(\phi_{A^{\prime n}A^n})\right) \notag\\
& \hspace{4.5cm} \geq \delta_\a ( \ve \, \| 1-2^{-nr}).
\end{align}
Maximizing over all average input state $\rho_{A'^{n}}$, we conclude that
\begin{align}
\widetilde \U_\a (\cN^{\ox n}) & \geq \delta_\a ( \ve \, \| 1-2^{-nr})\\
&  \geq \frac{1}{\a-1}\log (1-\ve)^{\a}(2^{-nr})^{1-\a}\\
& = \frac{\a}{\a-1}\log (1-\ve)+nr,
\end{align}
which implies that
$\ve \geq 1- 2^{-n\left(  \frac{\alpha-1}{\alpha}\right)
\left(  r-\frac{1}{n} \widetilde\U_{\alpha}\left(\mathcal{N}^{\otimes
n}\right)\right)}.$
\end{proof}

Note that any generalization of the R\'{e}nyi divergence that satisfies the data-processing inequality would suffice for this proof. But the monotonicity (in terms of $\a$) of the sandwiched  R\'{e}nyi divergence is required in the following proof.
\begin{proposition}
    For any covariant channel $\cN$, $\U(\cN)$ is a strong converse bound on the classical capacity.
\end{proposition}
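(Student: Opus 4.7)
The plan is to lift the weak-converse argument of Proposition \ref{U Uinfinity covariant} to a strong converse by combining the sandwiched R\'{e}nyi one-shot bound of Proposition \ref{p succ converse} with a subadditivity statement for $\widetilde\U_\a$ on covariant channels and the continuity $\widetilde\U_\a(\cN) \to \U(\cN)$ as $\a \to 1^+$. Concretely, Proposition \ref{p succ converse} yields, for any unassisted code achieving $(r,n,\ve)$, the bound
\begin{equation*}
\ve \,\geq\, 1 - 2^{-n \tfrac{\a-1}{\a}\bigl(r - \tfrac{1}{n}\widetilde\U_\a(\cN^{\ox n})\bigr)},
\end{equation*}
so it suffices to exhibit, for each $r > \U(\cN)$, a single $\a > 1$ such that $\tfrac{1}{n}\widetilde\U_\a(\cN^{\ox n}) < r$ uniformly in $n$.

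For the subadditivity $\widetilde\U_\a(\cN^{\ox n}) \leq n\,\widetilde\U_\a(\cN)$ I would argue as in Proposition \ref{U Uinfinity covariant}: the $n$-fold channel $\cN^{\ox n}$ is $G^n$-covariant under the product representation, so the sandwiched-R\'{e}nyi version of Lemma \ref{simplify input} (whose proof applies verbatim to any divergence that obeys data processing, as emphasised in the remark following that lemma) restricts the input maximisation to the maximally mixed state, turning it into $\min_{\cM \in \cV} \widetilde D_\a\bigl(\cN^{\ox n}(\Phi^{\ox n})\big\|\cM(\Phi^{\ox n})\bigr)$. Feeding in $\cM = (\cM_\a)^{\ox n}$, where $\cM_\a$ is an optimiser of $\widetilde\U_\a(\cN)$, and invoking additivity of $\widetilde D_\a$ on tensor-product states, yields the subadditivity bound, provided $(\cM_\a)^{\ox n}$ is itself in $\cV$.

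For the continuity step I would use that $\a \mapsto \widetilde D_\a(\rho\|\sigma)$ is non-decreasing on $[1,\infty)$ and satisfies $\lim_{\a \to 1^+}\widetilde D_\a = D$. This gives the trivial lower bound $\widetilde\U_\a(\cN) \geq \U(\cN)$ for $\a > 1$, while fixing any optimiser $\cM_0$ of $\U(\cN)$ in the minimisation yields $\widetilde\U_\a(\cN) \leq \widetilde D_\a(\cN(\Phi)\|\cM_0(\Phi)) \to D(\cN(\Phi)\|\cM_0(\Phi)) = \U(\cN)$ as $\a \to 1^+$, so $\widetilde\U_\a(\cN) \downarrow \U(\cN)$. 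For any $r > \U(\cN)$ one can then choose $\a > 1$ with $\widetilde\U_\a(\cN) < r$; subadditivity forces $\tfrac{1}{n}\widetilde\U_\a(\cN^{\ox n}) < r$, and Proposition \ref{p succ converse} gives $\ve \geq 1 - 2^{-n\gamma}$ with $\gamma := \tfrac{\a-1}{\a}(r - \widetilde\U_\a(\cN)) > 0$, so $\ve \to 1$ exponentially and $\U(\cN)$ is a strong converse.

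The main obstacle is the membership $(\cM_\a)^{\ox n} \in \cV$ in the subadditivity step: the defining condition of $\cV$ encodes only positivity of the linear map $\sigma\tr(\cdot) - \cM_\a(\cdot)$, not its complete positivity, so positivity of the output bound need not survive tensor products on entangled inputs. The cleanest workaround is to exploit $G$-covariance to replace $\cM_\a$ by its twirl $\cM \mapsto |G|^{-1}\sum_g \cV_B(g)^\dagger \cM \cU_{A'}(g)$ already used in Lemma \ref{simplify input} (which remains in $\cV$ with the same objective value) and then construct an explicit symmetric bounding state for $(\cM_\a)^{\ox n}$ using the one-design property of $\{U(g)\}_{g \in G}$; alternatively one can run the argument entirely inside the SDP-tractable subset $\cV_\b$ of Appendix \ref{app: MB}, which is manifestly closed under tensor products at the Choi-matrix level, and verify for covariant channels that the optimum over $\cV$ is attained inside $\cV_\b$ so that replacing $\cV$ by $\cV_\b$ leaves $\widetilde\U_\a(\cN)$ unchanged.
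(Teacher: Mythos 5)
Your proposal follows essentially the same route as the paper's proof: invoke the one-shot R\'{e}nyi bound of Proposition \ref{p succ converse}, reduce the input maximisation to the maximally mixed state via covariance (Lemma \ref{simplify input}), deduce subadditivity $\widetilde\U_\a(\cN^{\ox n})\le n\,\widetilde\U_\a(\cN)$, and finish with monotonicity in $\a$ together with $\lim_{\a\to 1^+}\widetilde\U_\a(\cN)=\U(\cN)$. The paper cites Lemma 3 of the Tomamichel--Wilde--Winter reference for the continuity step, whereas you give the squeeze argument directly, but the content is the same.

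Your flagged obstacle — that $(\cM_\a)^{\ox n}\in\cV$ is not automatic — is a genuine point, and one the paper does not address: the paper simply asserts ``Then $\widetilde\U_\a$ is subadditive'' (and likewise ``it is clear that $\U$ is subadditive'' in Proposition \ref{U Uinfinity covariant}). The defining condition $\cM(\rho)\le\sigma$ says only that the map $\sigma\tr(\cdot)-\cM(\cdot)$ is positive, not completely positive, and positive maps do not tensor; equivalently, the maximal output $\infty$-norm of a CP map is in general not multiplicative under tensor products (the Werner--Holevo map already violates it), so $\cV$ is indeed not closed under tensoring. Of your two proposed repairs, the twirling idea does not obviously produce a tensor-stable bounding state, while the $\cV_\b$ route does restore closure (submultiplicativity of $\b$ follows from the same feasible-solution argument that gives additivity of $C_\b$), but then the conclusion is a strong converse for the a priori larger quantity $\min_{\cM\in\cV_\b}$ rather than for $\U$ as stated, unless one additionally shows the $\cV$-optimiser for a covariant channel lies in $\cV_\b$ — which you note but do not prove. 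So your write-up is more careful than the paper's on exactly this point, but neither fully discharges the subadditivity step over $\cV$.
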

\begin{proof}
From Lemma \ref{simplify input}, we can fix the average input state of $\widetilde \U_\a(\cN)$ to be the maximally mixed state. Then $\widetilde\U_\a$ is subadditive, i.e., $\widetilde\U_\a(\cN^{\ox n}) \leq n \widetilde\U_\a(\cN)$. Thus from Eq.~\eqref{strong converse}, we have 
\begin{align}
\ve \geq 1- 2^{-n\left(  \frac{\alpha-1}{\alpha}\right)
\left(  r-\widetilde\U_{\alpha}\left(\mathcal{N}\right)\right)} \, .
\end{align}
The quantity $\widetilde\U_\a(\cN)$ is monotonically increasing in $\alpha$. Following the proof of Lemma 3 in~\cite{Tomamichel2015a}, we can also show that 
$\lim_{\alpha \to 1^+} \widetilde \U_\alpha(\cN) =\U(\cN).$
Hence, for $r > \U(\cN)$, there always exists an $\a >1$ such that $r > \widetilde \U_\a(\cN)$. Therefore the error $\ve$ will to to $1$ as $n$ goes to infinity.
\end{proof}

\vspace{0.2cm}
The following two properties would be required to show that $\U$ is a strong converse bound for general channels. 
\begin{itemize} 
  \item Weak subadditivity: $\widetilde \U_\a(\cN^{\ox n})\le n \widetilde \U_\a(\cN)+ o(n)$
  \item Continuity: $\lim_{\alpha \to 1^+} \widetilde \U_\alpha(\cN) =\U(\cN).$
 \end{itemize}

 \section{New meta-converse over $\cV_\b$ is an SDP}\label{app:meta converse SDP}
In this section, we show that our new meta-converse in Theorem~\ref{meta converse main theorem} can be written as an SDP.
Let us first write
\begin{align}
    & \hspace{-0.2cm} \min_{\cM \in \cV_\b} \max_{\rho_{A'}}  D_H^{\e}(\cN_{A'\to B}(\phi_{A'A})\big\|\cM_{A'\to B}(\phi_{A'A})) \label{meta converse over v beta}\\
    & \hspace{-0.1cm} = -\log \max_{\cM \in \cV_\b} \min_{\rho_{A}}  \b_{\e}(\sqrt{\rho_A} J_{\cN} \sqrt{\rho_A}\big\|\sqrt{\rho_A} J_{\cM} \sqrt{\rho_A}).\label{meta converse over v beta 1}
\end{align}
According to the definition of $\beta_\ve$, the minimization part in~\eqref{meta converse over v beta 1} is equivalent to the optimization,
\begin{equation}\label{appendix D tmp 1}
\begin{split}
    \underset{\rho_A,\,F_{AB}}{\minimize} &\quad \tr \sqrt{\rho_A} J_{\cM} \sqrt{\rho_A} F_{AB}\\
    \text{subject to} &\quad \tr \sqrt{\rho_A} J_{\cN} \sqrt{\rho_A} F_{AB} \geq 1-\ve,\\
    &\quad \ 0 \leq F_{AB} \leq \1_{AB},\, \rho_A \geq 0,\, \tr \rho_A = 1.
\end{split}
\end{equation}
Let $G_{AB} = \sqrt{\rho_A} F_{AB} \sqrt{\rho_A}$. We have \eqref{appendix D tmp 1} being equivalent to
\begin{align}
    \underset{\rho_A,\,G_{AB}}{\minimize} &\quad \tr J_{\cM} G_{AB}\notag\\
    \text{subject to} &\quad \tr J_{\cN} G_{AB} \geq 1-\ve,\label{appendix D tmp 2}\\
    &\quad \ 0 \leq G_{AB} \leq \rho_{A}\ox \1_B,\, \rho_A \geq 0,\, \tr \rho_A = 1,\notag
\end{align}
 with the dual SDP given by
 \begin{equation}\label{appendix D tmp 2 dual}
\begin{split}
    \underset{x,\,y, Z_{AB}}{\maximize} & \quad (1-\ve)x + y\\
    \text{subject to} & \quad J_{\cM} - x J_{\cN} + Z_{AB} \geq 0,\\
    &\quad y\1_A + \tr_B Z_{AB} \leq 0,\, x \geq 0,\, Z_{AB} \geq 0.
\end{split}
\end{equation}
Finally, combining \eqref{appendix D tmp 2 dual} with the maximization condition $\cM \in \cV_\b$ in \eqref{meta converse over v beta 1},  we obtain the following SDP for the meta-converse~\eqref{meta converse over v beta}:
\begin{align}
    -\log \quad \underset{\substack{x,\,y,\,J_{\cM},\\ Z_{AB},\,S_B,\,R_{AB}}}{\maximize} & \quad (1-\ve)x + y\notag\\
    \text{subject to} & \quad J_{\cM} - x J_{\cN} + Z_{AB} \geq 0,\notag\\
    &\quad y\1_A + \tr_B Z_{AB} \leq 0,\\
    & \quad x \geq 0,\, Z_{AB} \geq 0, \tr S_B \leq 1 \notag\\
    &\quad  -R_{AB}\le J_{\cM}^{T_B}\le R_{AB},\notag\\
    &\quad -\1_A\ox S_B\le R_{AB}^{T_B}\le \1_A\ox S_B.\notag
\end{align}
\end{appendices}

\end{document}